\documentclass[journal]{IEEEtran}
\usepackage{epsfig,amsmath,amssymb,epsf,amsthm,scalefnt,multirow,subfig}
\usepackage{xcolor}
\usepackage{float}
\usepackage{cite}
\usepackage{psfrag}
\usepackage{array}
\usepackage{verbatim}
\usepackage{tikz}
\usetikzlibrary{tikzmark,calc,decorations.pathreplacing}
\usepackage{amsmath}

\newcommand{\argmax}{\mathop{\mathrm{argmax}}}

\newtheorem{theorem}{Theorem}

\newtheorem*{lemma*}{Lemma}
\newtheorem{remark}{Remark}

\newtheorem{proposition}{Proposition}

% Calligraphic uppercase

\def\argmax{\mathop{\mathrm{argmax}}}

 \def\bomega  
\def\bTheta{{\pmb{\Theta}}}

\def\b0{{\pmb{0}}} 

% Bold

%{{\mbox{\rm $\scriptscriptstyle ^\mid$\hspace{-0.40em}C}}} %o
%{{\mbox{\rm $\scriptscriptstyle ^\mid$\hspace{-0.40em}C}}} %o
\allowdisplaybreaks[4]

% Block for Real, Complex

%\linespread{1.65}
%\textwidth=6.5in

\begin{document}

% paper title
\title{ Anti-Jamming Games \\ in Multi-Band Wireless Ad Hoc Networks}

 %author names and IEEE memberships
% \author{\IEEEauthorblockN{Hyeon-Seong Im and Si-Hyeon Lee}\\

\author{Hyeon-Seong Im,~\IEEEmembership{Graduate Student Member,~IEEE,}
        and Si-Hyeon Lee,~\IEEEmembership{Senior Member,~IEEE}% <-this % stops a space
\thanks{H.-S. Im and S.-H. Lee (Corresponding Author) are with the School of Electrical Engineering, Korea Advanced Institute of Science and Technology (KAIST), Daejeon, South Korea (e-mail: imhyun1209@kaist.ac.kr, sihyeon@kaist.ac.kr). A short version of this paper was submitted to IEEE GLOBECOM 2022 \cite{Im:2022_globecom}.}% <-this % stops a space
}

\maketitle

\begin{abstract}
For multi-band wireless ad hoc networks of multiple users, an anti-jamming game between the users and a jammer is studied. In this game, the users (resp. jammer) want to maximize (resp. minimize) the expected rewards of the users taking into account various factors such as communication rate, hopping cost, and jamming loss. We analyze the arms race of the game and derive an optimal frequency hopping policy at each stage of the arms race based on the Markov decision process (MDP). It is analytically shown that the arms race reaches an equilibrium after a few rounds, and a frequency hopping policy and a jamming strategy at the equilibrium are characterized. We propose two kinds of collision avoidance protocols to ensure that at most one user communicates in each frequency band, and provide various numerical results that show the effects of the reward parameters and collision avoidance protocols on the optimal frequency hopping policy and the expected rewards at the equilibrium. Moreover, we discuss about  equilibria for the case where the jammer adopts some unpredictable jamming strategies. 
\end{abstract}  

\section{Introduction}\label{sec1}
In wireless ad hoc networks, multiple sender-receiver pairs communicate in a distributed manner without the help of infrastructures such as base stations \cite{Toh:1996,Rubinstein:2006,Sharmila:2016}. The applications of wireless ad hoc networks include military communications~\cite{Zhou:1999} and emergency communications \cite{Gyoda:2008} for which it is inherently hard to access infrastructures,  local networks of nearby devices like personal area networks or Internet of Things (IoT) networks \cite{Al-Fuqaha:2015}, and cognitive radio communications where secondary users communicate over empty channels \cite{Akyildiz:2009}. Many wireless ad hoc networks operate in strong interference regime \cite{Yaqoob:2017,Zain-Ul-Abideen:2019}, which makes them vulnerable to jamming attacks \cite{Xu:2005,Aristides:2009} as  jamming attacks with low power can lead to communication failures. Moreover,  jamming attacks are more fatal to mobile ad hoc networks of battery-operated devices, by making the devices retransmit the data packets repeatedly~\cite{Yaqoob:2017, Weniger:2004}. 

An effective anti-jamming strategy is to perform frequency hopping over multiple frequency bands (multiple channels) \cite{Zain-Ul-Abideen:2019, Khattab:2010,Khattab:2008,Navda:2007,Pickholtz:1982}. A simple frequency hopping policy is to hop to other channels only if a jamming attack is detected in the current channel to minimize the hopping cost \cite{Khattab:2010,Khattab:2008}. This reactive frequency hopping is effective for random jamming attacks that randomly choose the channels to attack. For more intelligent jamming attacks, a proactive frequency hopping policy can perform better where a user stays in a channel for a fixed time and then proactively hops to other channels \cite{Khattab:2008,Navda:2007}. This proactive frequency hopping policy is also appropriate when it is not easy for a user to detect a jamming attack. The classical frequency hopping spread spectrum   can be also used to mitigate jamming attacks without  jamming detection by rapidly changing the channels  \cite{Pickholtz:1982}. We note that in general, the performance accessment of a frequency hopping policy depends on the type of the jamming attack, and vice versa. 

By taking into account such an interplay between frequency hopping policies and jamming attacks, a game-theoretic approach has been taken for anti-jamming studies. The works \cite{Altman:2007,Garnaev:2019,Wu:2009,Guizani:2020} studied  power allocation games between users and jammers, under the assumption of full channel state information over all the frequency bands at the users and at the jammers. The goal of the game for the users (resp. jammers) is to maximize (resp. minimize) the sum  capacity over all channels  \cite{Altman:2007,Garnaev:2019}. The work \cite{Altman:2007} considered a single-user scenario with transmission cost and showed that a generalized water-filling achieves the unique Nash equilibrium. A similar setting with \cite{Altman:2007} was considered in \cite{Garnaev:2019}   in the presence of an eavesdropper.  Such power allocation problems have been also modeled as Colonel-Blotto games \cite{Roberson:2006} where users try to maximize the number of the channels whose signal-to-noise-and-interference ratios  exceed a certain threshold \cite{Wu:2009,Guizani:2020}. 

For more complicated scenarios where it is hard to directly analyze an equilibrium, the analysis of arms race can give good approximations to a game equilibrium. In the arms race, users and jammers alternately improve their strategies in a way that the users (resp. jammers) find the best frequency hopping policy (resp. jamming strategy) against the current jamming strategy (resp. frequency hopping policy). An optimal frequency hopping policy against a given jamming strategy can be derived based on the Markov decision process (MDP) \cite{Howard:1960}, which in general aims at finding an optimal policy that specifies an optimal action to be taken at each state that maximizes the expected reward. In the derivation of an optimal frequency hopping policy, an MDP-based approach can be useful because (i) it can utilize the information about the past experience, e.g., how long it stayed the current channel, whether it was jammed or collided with other users, etc., by defining the states appropriately, (ii) it can capture various cost functions such as hopping cost and  jamming loss by defining the reward accordingly,  and (iii) it can maximize the accumulated reward over time. The arms race in anti-jamming games has been studied mainly for single-user cases. For cognitive networks with a single secondary user \cite{Wu:2012}, the first few rounds of the arms race were analyzed and  a frequency hopping policy was derived based on the MDP. It was shown that the game reaches near an equilibrium after a few rounds of the arms race. The works \cite{Hanawal:2020,Hanawal:2016} considered the arms race when a user has some options of communication modes, i.e., the user can choose either in-band-full-duplex (IBFD) or half-duplex (HD) modes in  \cite{Hanawal:2020} and it can choose a communication rate among a finite set of candidates in \cite{Hanawal:2016}. 

In this paper, we consider an anti-jamming game in wireless ad hoc networks of multiple users. For multi-user networks, the interference among the users needs to be considered, which makes the analysis of anti-jamming games complicated. Due to its intractability, most of the previous works on anti-jamming games in multi-user scenarios considered the problem of finding an optimal  frequency hopping policy based on Q-learning \cite{Watkins:1992},  which approximates the environment in an empirical way, against unknown jamming attacks, instead of analyzing Nash  equilibria of the games. In particular, the work \cite{Yao:2019} considered the scenario where the users collaborate to maximize their global reward. It formulated the problem as a multi-agent MDP and proposed a Q-learning based algorithm which learns optimal frequency hopping policies of the users to maximize the expected global reward. The works \cite{Aref:2017,Shan:2018} considered a setting where each user is equipped with a sensing part and a transmission part. In \cite{Aref:2017}, the sensing part learns the strategies of the jammer and the other users, and the transmission part derives an optimal frequency hopping policy based on the sensing result, where both the parts use Q-learning algorithms for learning. In these works, the performance of Q-learning based algorithms was numerically analyzed under some jamming strategies. However, the analysis of the arms race and the characterization of equilibria for anti-jamming games in multi-user networks have not been well studied. 

Our main contribution in this paper is in analyzing an equilibrium of an anti-jamming game in wireless ad hoc networks. We consider a multi-band wireless ad hoc network where multiple users want to communicate in a fully distributed manner in the presence of a jammer. Each user aims at finding a frequency hopping policy that maximizes its expected reward, which is a function of communication throughput, hopping cost, and jamming loss, and at the same time, the jammer targets to minimize the users' rewards by effectively choosing the channels to attack. For this anti-jamming game, we analyze the arms race and derive an optimal frequency hopping policy  at each stage of the arms race based on the MDP. 
The contribution of this paper can be summarized as follows: 
\begin{itemize}
\item For the multi-band wireless ad hoc network of multiple users, the arms race is shown to reach an equilibrium after a few rounds,  and a frequency hopping policy and a jamming strategy at the equilibrium are characterized. Compared to the single-user case, the main difficulty in characterizing an optimal frequency hopping policy based on the MDP for the multi-user case comes from the mutual effects among the users, i.e.,  the environment with respect to each user needs to be specified to derive an optimal frequency hopping policy based on  MDP, but the environment is affected by the frequency hopping policies of the other users. To tackle this, we analyze iterative updates between the environment and the users' policies and show the convergence in finite number of iterations.  

\item Various numerical results are presented to show the effects of the reward parameters and collision avoidance protocols on the frequency hopping policy and the expected rewards of the users  at the equilibrium. 

\item The jamming strategy at the equilibrium is shown to scan the channels successively according to a fixed sweeping pattern. Here the underlying assumption is that the users do not know the sweeping pattern, but it is possible that the users eventually learn the scanning pattern after a sufficiently long time. Accordingly, we consider  the  case where the jammer adopts some unpredictable jamming strategies and show the existence of an unpredictable jamming strategy at an equilibrium under some conditions. 
\end{itemize}

The outline of the remaining paper is as follows. We formally present the problem statement in Section \ref{sec2}, and formulate the MDP  for finding an optimal frequency hopping policy at each stage of the arms race in Section \ref{sec3}. A frequency hopping policy and a jamming strategy at an equilibrium are characterized by analyzing the arms race for the single-user case and for the multi-user case in Sections~\ref{sec4}~and~\ref{sec5}, respectively. The arms race for the case where the jammer adopts some unpredictable jamming strategies is analyzed in Section \ref{sec6}. Finally, we conclude this paper in  Section \ref{sec7}.

\section{Problem Statement}\label{sec2}
\subsection{User model}\label{sec2A}
Consider a time-slotted multi-band wireless ad hoc network with $n$ sender-receiver pairs  (S-R pairs or users)\footnote{In wireless ad hoc networks, the synchronization of the users can be achieved by the time synchronization function in IEEE 802.11 standards~\cite{Basagni:2004}.}.
The frequency band is divided into $M$ non-overlapping channels $\{c_1,c_2,...,c_M\}$. In each time slot, each sender tries to transmit its data packets to the intended receiver through one of the $M$ channels.  The communication delays are assumed to be negligible. 
The overall transmission model in each time slot consists of the following phases, as also illustrated in Fig. \ref{Fig1}.

\begin{figure}
\centering
\includegraphics[width=0.9\columnwidth]{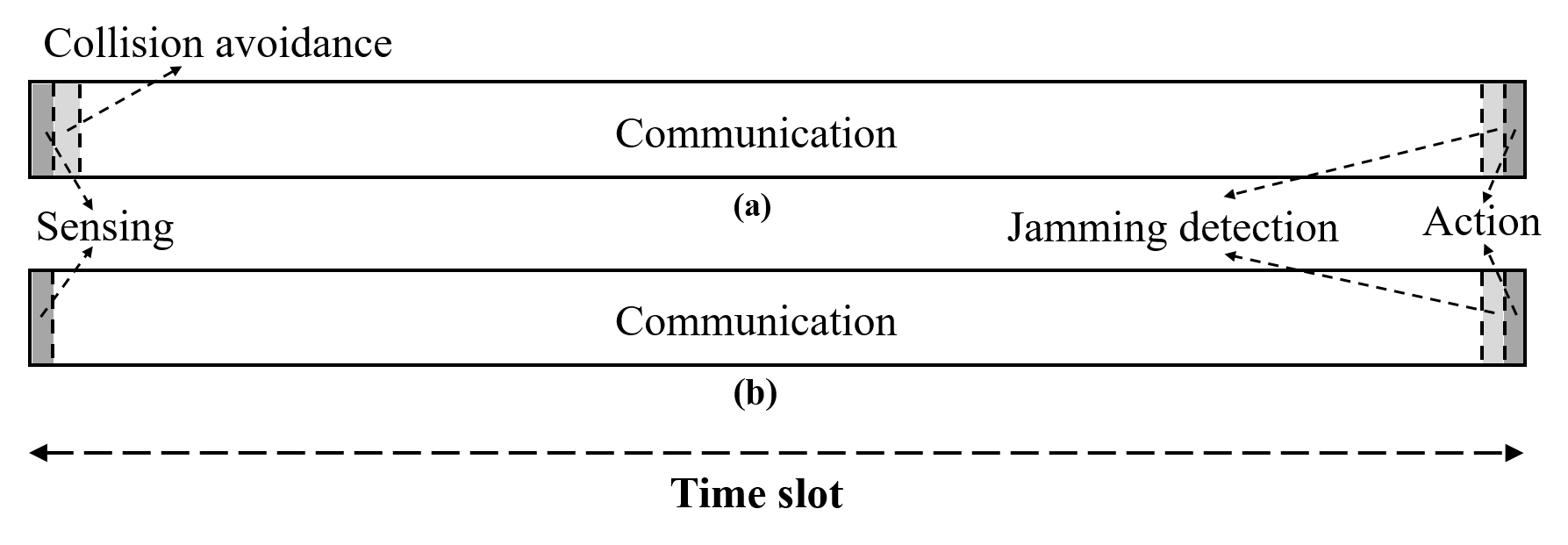}
\caption{Transmission protocol. $(a)$ and $(b)$ represent the users' protocol corresponding to the cases where the sensing phases are silent and non-silent, respectively.}\label{Fig1}
\end{figure}

\begin{itemize}
\item Sensing phase: The S-R pairs are assumed to use an error-correcting code designed for point-to-point communications, hence an appropriate collision avoidance protocol needs to be incorporated. We give priority to the S-R pair that already occupied the channel. To do so, at the beginning of each time slot, called sensing phase, the sender who already occupied the channel broadcasts a pilot signal. Then the S-R pairs who just hopped to the channel give up the communication and randomly hop to other channels in the next time slot. 

\item Collision avoidance phase: If there is an S-R pair who already occupied the channel, this phase is skipped. If not, i.e., it is silent in the sensing phase, since there can be multiple S-R pairs who newly hop to the same channel, we apply some collision avoidance protocols to ensure that at most one S-R pair communicates in the channel. We consider some symmetric collision avoidance protocols among the S-R pairs, which are described in Section~\ref{sec5}. 

\item Communication phase: The communication of the S-R pair, who already occupied the channel or who was allowed to communicate through the collision avoidance protocol, takes place. 

\item Jamming detection phase: The S-R pair, who  just finished the communication, judges whether the communication has been jammed. We assume that the jamming hypothesis test is correct with high probability.

\item Action phase: At the end of the time slot, each S-R pair determines whether to hop or not based on the history. This decision rule is called a frequency hopping policy.  Since we consider symmetric collision avoidance protocols among the S-R pairs, the frequency hopping policies of the S-R pairs are also assumed to be symmetric, i.e., the S-R pairs use the same policy.
\end{itemize}

If an S-R pair determines to hop, the next channel is selected uniformly at random among all $M$ channels.
%\footnote{In Appendix 1, we consider the case where the next channel is selected uniformly at random among the other $M-1$ channels except the current channel.}
Also, each S-R pair is assumed to share a sufficiently long pesudo-random sequence, so that they can hop to the same channel.  

\subsection{Jammer model}\label{sec2B}
A jammer scans $m<M$ channels in each time slot to test whether the S-R pairs are communicating or not. We assume that $m$ divides $M$, and let $T:=M/m$. If the jammer detects some communications in the scanned channels, then it attacks them by transmitting sufficiently large Gaussian noise so that the receivers cannot decode the data packets with high probability. We assume that the jamming attacks occur in the communication phase, i.e., the sensing and collision avoidance phases are not affected by the jamming attack. An example of communication scenario is illustrated in Fig. \ref{Fig2}. 

\begin{figure}
\centering
\includegraphics[width=0.9\columnwidth]{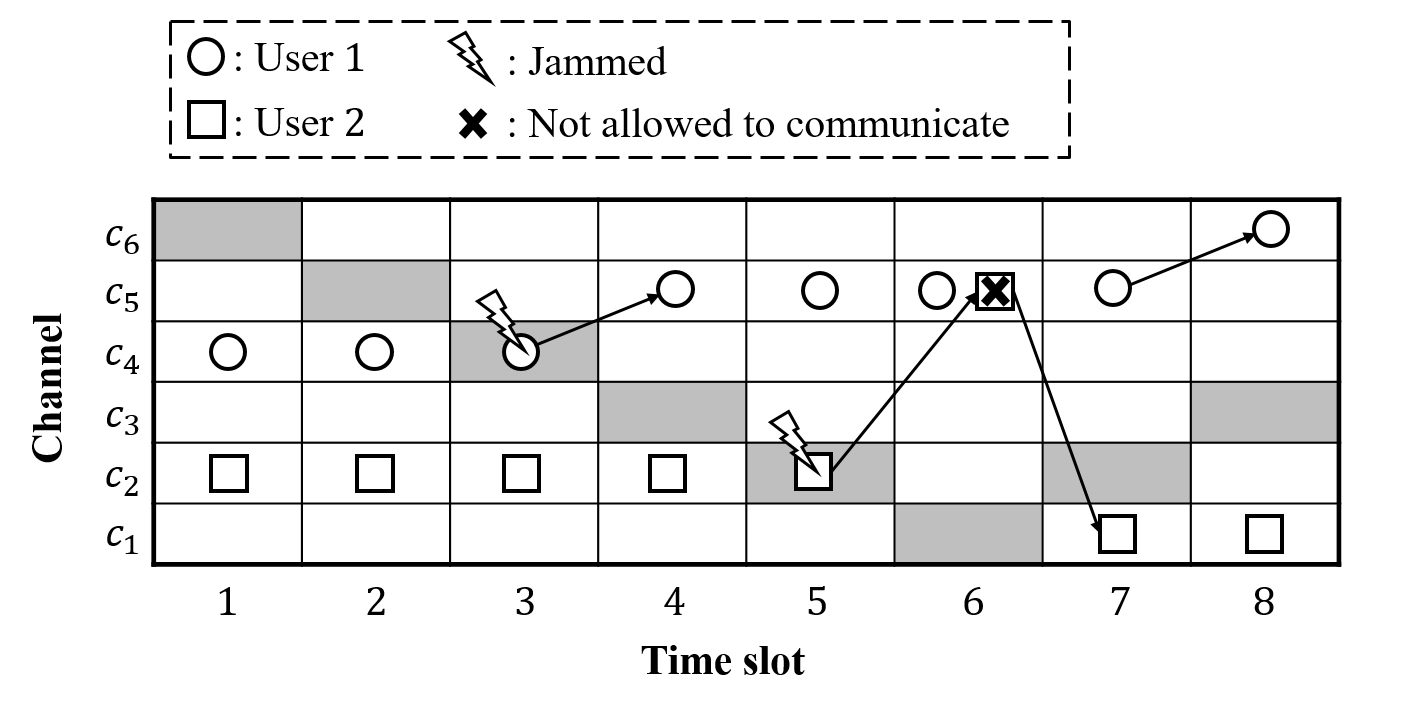}
\caption{An example of communication scenario, where $M=6$, $m=1$, and $n=2$. 
The channels scanned by the jammer are represented with shaded color. In this example, users 1 and 2 are jammed in time slots 3 and 5, respectively. In time slot 6, user 2 is not allowed to communicate in the communication phase since user 1 has already occupied the same channel. }\label{Fig2}
\end{figure}

\subsection{Arms race}\label{sec2C}
The S-R pairs and the jammer play a zero-sum game, i.e., the S-R pairs and the jammer try to maximize and minimize a ``reward", respectively. This reward, precisely defined in Section \ref{sec3}, is set to take into account  the communication throughput, jamming loss, hopping cost, and the priority of the current  compared to the future.  The zero-sum game can be described as an arms race  \cite{Wu:2012}. In the arms race, when the jammer changes the jamming strategy to minimize the reward, the S-R pairs change the frequency hopping policy to maximize it, and vice versa.  In some cases, as the arms race continues, the game reaches an  equilibrium, where the jamming strategy and the frequency hopping policy no longer change. In this paper, we analyze how the arms race proceeds and show pairs of frequency hopping policy and jamming strategy at an equilibrium.  

In TABLEs \ref{table1} and \ref{table2}, the jamming strategies and the frequency hopping policies that will appear in the arms race are summarized. Two kinds of jamming strategies appear in the arms race, i.e., $G$-memory jamming and reactive sweep jamming. The $G$-memory jamming for  $G\in[0:T-1]$ selects $m$ target channels uniformly at random out of all the $M$ channels except the channels scanned in the recent $G$ time slots. The special cases of $G=0$ and $G=T-1$ are called random jamming and basic sweep jamming \cite{Wu:2012}, respectively.  In contrast to that the $G$-memory jamming is not affected by whether the jammer detects a communication or not, the reactive sweep jammer performs the basic sweep jamming as long as it does not detect a communication, but once it detects a communication, it newly starts the basic sweep jamming with a randomly updated pattern.

\begin{table}
\centering
\begin{tabular}{|m{1.3cm}|| m{1.4cm}|m{4.3cm}|}
%\begin{tabular}{ |c|c|c| } 
\hline
{Jamming strategy} & \multicolumn{2}{c|}{Definition}\\
\hline
\multirow{3}{1.5cm}{ } & \multicolumn{2}{m{6.3cm}|}{Select $m$ target channels uniformly at random out of all the $M$ channels except the channels scanned in the recent $G$ time slots.} \\
\cline{2-3}
$G$-memory jamming, for $G\in[0:T-1]$& $G=0$:  Random jamming & Select $m$ target channels uniformly at random out of all the $M$ channels for each time slot.\\
 \cline{2-3}
& $G= T-1$: Basic sweep jamming &  Scan all the $M$ channels in any window of $T$ consecutive time slots where the pattern of choosing $m$ channels across the time (sweep pattern) is fixed as shown in Fig. \ref{Fig3}.\\
\hline
 \multirow{1}{1.5cm}{\\ Reactive sweep jamming}  &\multicolumn{2}{p{6.3cm}|}{Perform the basic sweep jamming as long as the jammer does not detect a communication. Whenever the jammer detects a communication, it newly starts basic sweeping jamming with a randomly updated pattern.} \\
\hline
\end{tabular}
\caption{Jamming strategies}\label{table1}
\end{table}

\begin{table}
\centering
\begin{tabular}{|m{2cm}|| m{1.3cm} | m{3.5cm}}
\hline
Frequency hopping policy & \multicolumn{2}{c|}{Definition}\\
\hline
\multirow{2}{2.2cm}{\\ $K$-staying policy, for $K\in[0:\infty]$}  & \multicolumn{2}{m{5.6cm}|}{If the S-R pair is not allowed to communicate in the collision avoidance phase or it is jammed, it randomly and uniformly hop in the next time slot. If not, i.e.,  the S-R pair is allowed to communicate in the collision avoidance phase and it is not jammed, it stays the same channel if it has stayed there for $\leq K$ consecutive time slots, and it randomly hops in the next time slot if it has stayed there for $>K$ consecutive time slots. } \\
 \cline{2-3}
&$K=\infty$: Minimal hopping policy & \multicolumn{1}{m{3.8cm}|}{Stay at the same channel unless the S-R pair is jammed or it is not allowed to communicate in the collision avoidance phase. } \\
\hline
\end{tabular}
\caption{Frequency hopping policies}\label{table2}
\end{table}

\begin{figure}
\centering
\includegraphics[width=0.8\columnwidth]{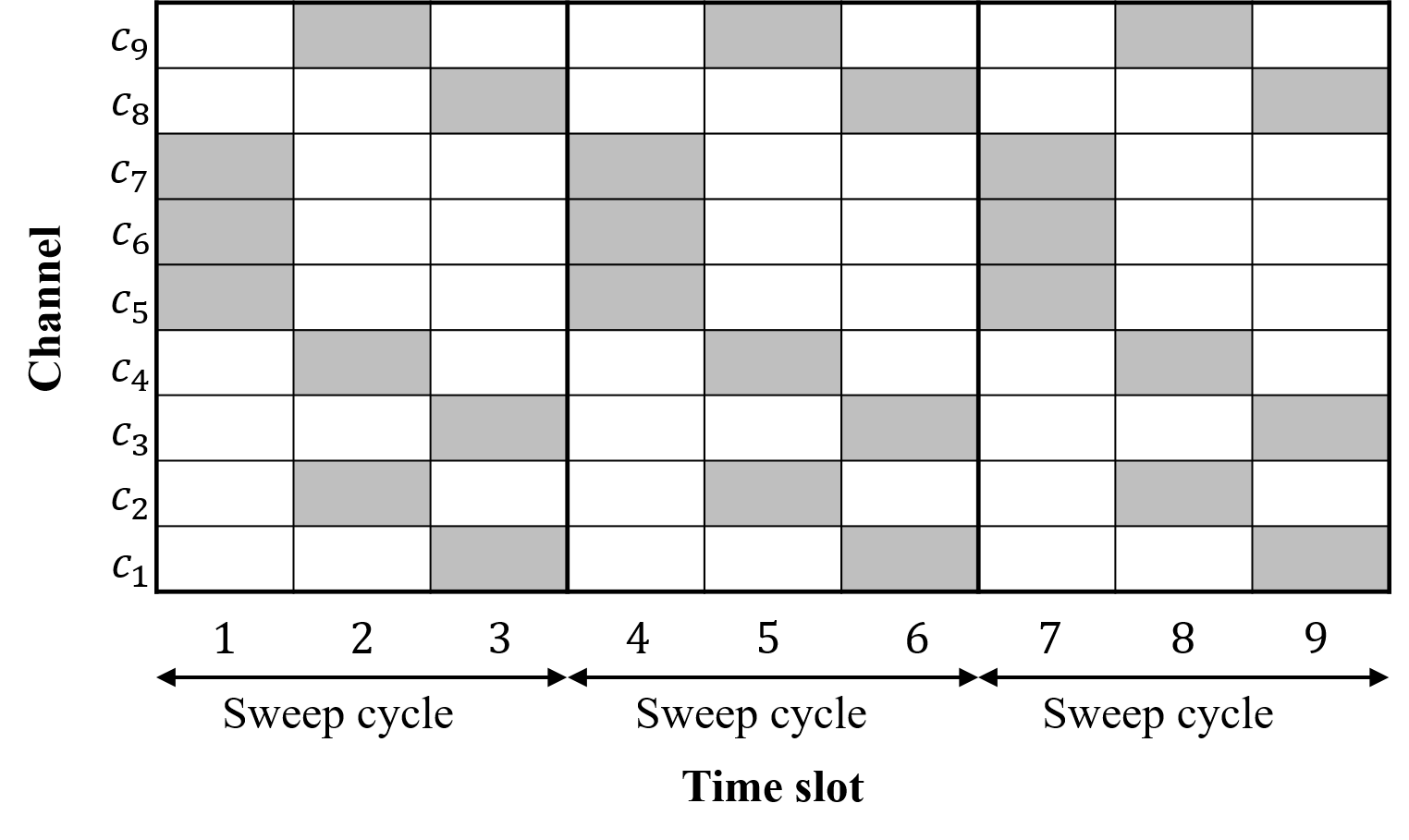}
\caption{An example of basic sweep jamming for $M=9$ and $m=3$. The channels scanned by the jammer are represented with shaded color.}\label{Fig3}
\end{figure}

For the frequency hopping policy,  we assume that an S-R pair always hops if an S-R pair is jammed or it is not allowed to communicate in the collision avoidance phase. In the arms race, the $K$-staying policy defined in Table \ref{table2} appears. In the $K$-staying policy, the S-R pair can stay at the same channel for at most $(K+1)$ consecutive time slots as long as it is allowed to communicate in the collision avoidance phase and it is not jammed. The special case of $K=\infty$ is called minimal hopping policy, where the S-R pair stays the same channel as long as it is allowed to communicate and it is not jammed.
%The minimal hopping policy does not require the ``not jammed" condition to stay in the $\infty$-staying policy. For the minimal hopping policy,  the user stays the same channel as long as it is allowed to communicate, and otherwise hop.

%If $G=0$, i.e., there is no memory, the jammer select $m$ target channels uniformly at random in each time slot and we call this jamming attack  random jamming. As another extreme, for the $(T-1)$-memory jamming, which we call basic sweep jamming, all the $M$ channels are scanned in any window of $T$ consecutive time slots. Note that the pattern of choosing $m$ channels for the basic sweep jamming is determined randomly at the beginning of the attack and then fixed. 
%
% The goal of the paper is to derive the optimal frequency hopping patterns of the S-R pairs corresponding to the collision avoidance protocols and the jamming strategies, obtained after several iterations of the arms race. Another goal is to show the existence of the zero sum game equilibrium between the jamming strategies and the frequency hopping policies.

The arms race starts with the naive random  jamming.  The combating hopping policy corresponding to the random jamming would be the minimal hopping policy. Since the random jamming newly and randomly chooses the set of $m$ channels for each time slot,  there is no reason for an S-R pair to hop by paying the hopping cost unless it is jammed or it is not allowed to communicate in the collision avoidance phase. In response to the minimal hopping policy of the S-R pairs, an optimal jamming strategy  is the basic sweep jamming \cite{Wu:2012}. Note that the basic sweep jammer scans all the $M$ channels using the least number of the time slots. In Sections \ref{sec4} and \ref{sec5}, we analyze the arms race that proceeds after the basic sweep jamming strategy and show that the next arms race against the basic sweep jammer is the $K$-staying policy for some $K$. Furthermore, it is shown that the basic sweep jamming and the $K$-staying policy with appropriately chosen $K$ form an equilibrium.

\section{MDP Formulation}\label{sec3}
For each S-R pair, the process of finding an optimal frequency hopping policy can be modeled as a Markov decision process (MDP) \cite{Howard:1960}, under the environment determined by the hopping policies of the other S-R pairs and the jamming strategy. 
The MDP model of an S-R pair is described by a tuple $(\mathcal{S},\mathcal{A},\mathcal{P},\mathcal{U},\gamma)$. First, the set $\mathcal{S}$ of states is given as $\{J,I,1,2,\cdots,K_m\}$. The state $S_t\in\mathcal{S}$ at time $t$  is determined at the end of the jamming detection phase. It becomes $K\in [1:K_m]$ if the S-R pair has successfully communicated for $K$ consecutive times up to time $t$, where $K_m$ denotes the largest possible number of consecutive times in which the communications of the S-R pair are  successful. This $K_m$ is affected by the jamming strategy and it can be infinite for some jamming strategies. The state $S_t$ becomes $I$ if the S-R pair did not get the chance to communicate at time $t$, and becomes $J$ if the S-R pair tried to communicate but it was jammed. The set $\mathcal{A}$ of actions that the S-R pair can take is $\{s,h\}$, i.e., stay in the same channel or hop uniformly at random. In the action phase of time $t$, the S-R pair chooses an action $A_t\in \mathcal{A}$ based on its current state $S_t$.  Next, the probabilistic transition function $\mathcal{P}:\mathcal{S}\times\mathcal{A}\times\mathcal{S}\rightarrow[0,1]$ represents the probability of the next state $S_{t+1}$ given the current state  $S_t$ and action $A_t$. The transition function depends on the jamming strategy and the collision avoidance protocol. We note that the set of possible next states is reduced  depending on the current state and action. If $A_t=h$, then $S_{t+1}\in\{J,I,1\}$ regardless of $S_t$ as shown in Fig. \ref{Fig4}. The transitions for the case of $A_t=s$ are represented in Fig. \ref{Fig5}.
If $A_t=s$ and $S_t\in[1:K_m-1]$, then $S_{t+1}\in\{S_t+1,J\}$ because the S-R pair will try to communicate at time $t+1$ since it has occupied the channel, and $S_{t+1}$ will be $S_{t}+1$ if the communication is successful and $S_{t+1}$ will be $J$ otherwise. If $A_t=s$ and $S_t=K_m$, then $S_{t+1}=J$ from the definition of~$K_m$. Note that $A_t=s$ is not allowed when $S_t\in\{J,I\}$. Now, the S-R pair gets a reward at the end of time $t$, based on $S_t$ and $A_t$, according to the reward function $\mathcal{U}:\mathcal{S}\times\mathcal{A}\rightarrow\mathbb{R}$. If the communication was successful in time $t$, the S-R pair gets a reward $R>0$ proportional to the communication throughput. If the S-R pair tried to communicate but it was jammed, the S-R pair just wasted some communication resource such as transmit power and gets a reward $-L<0$.  If the S-R pair decides to hop at the end of time $t$, it needs to pay a hopping cost $C>0$, because the sender cannot start communication immediately right after it hops to new channel due to the settling time, e.g., the settling time is about 7.6ms in Atheros chipset card \cite{Navda:2007}. Another reason behind the hopping cost is due to the addition of the collision avoidance phase. Note that the collision avoidance phase is skipped if the S-R pair stays in the same channel. 

By taking into account all the aforementioned factors, the reward $U_{t+1}$ at time $t+1$ (right after time $t$) for state $S_t$ and action $A_t$ is given as follows: 
\begin{align}
 U_{t+1}(S_t,A_t)=
 \begin{cases}
 \hfil R \!\!\!\! &\textrm{for \ }S_t\in [1:K_m], A_t=s,\\
 \hfil R-C \!\!\! &\textrm{for \ }S_t\in[1:K_m], A_t=h,\\
 \hfil -L-C \!\!\!\! &\textrm{for \ }S_t= J, A_t=h,\\
 \hfil -C \!\!\!\! &\textrm{for \ }S_t= I, A_t=h.
 \end{cases}\label{eq:1}
\end{align}

\begin{figure}
\centering
\includegraphics[width=0.7\columnwidth]{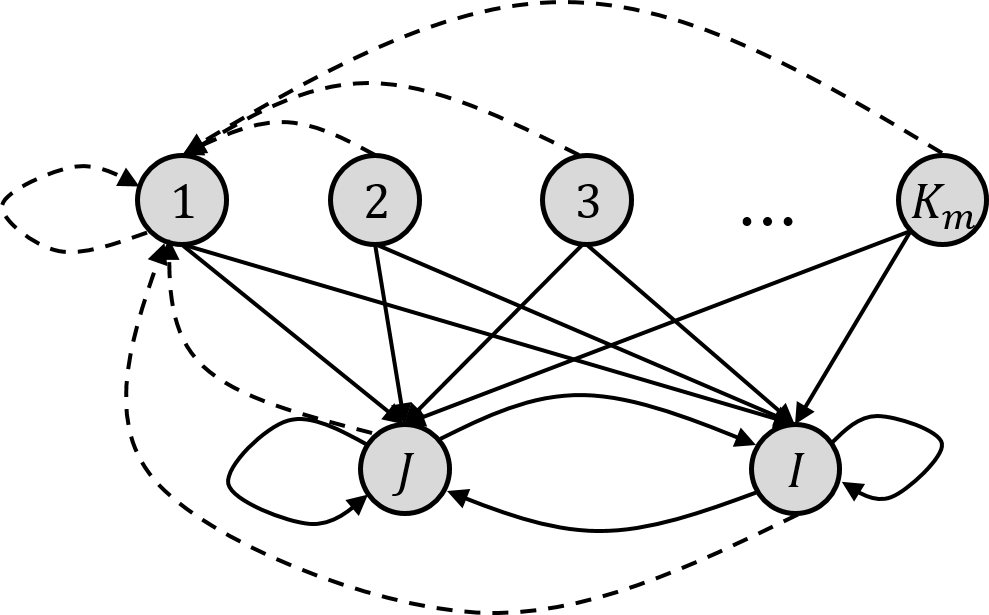}
\caption{Possible state transitions from $S_t$ to $S_{t+1}$ when $A_t=h$. The solid and the dashed lines represent the transitions to $S_{t+1}\in\{J,I\}$ and $S_{t+1}=1$, respectively.}\label{Fig4}
\end{figure}

\begin{figure}
\centering
\includegraphics[width=0.7\columnwidth]{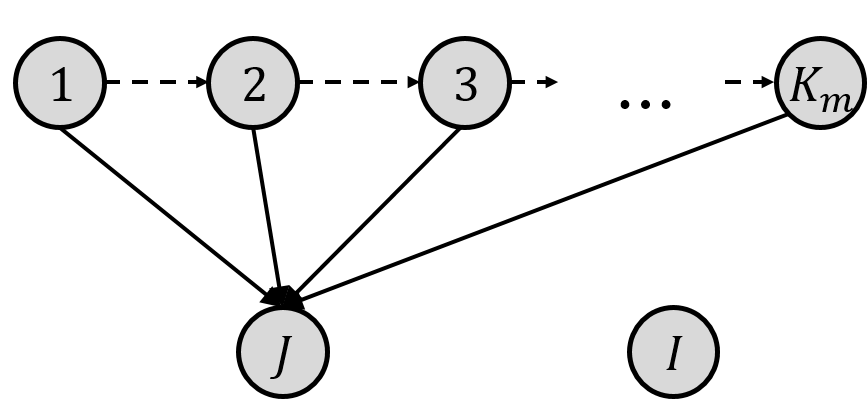}
\caption{Possible state transitions from $S_t$ to $S_{t+1}$ when $A_t=s$. The solid and the dashed lines represent the transitions to $S_{t+1}=J$ and $S_{t+1}\in[1:K_m]$, respectively. }\label{Fig5}
\end{figure}

The policy $\pi:\mathcal{S}\rightarrow\mathcal{A}$ of each S-R pair specifies the action $\pi(S)$ that the S-R pair will choose at state $S$. 
%We define the set of all possible policies as $\Pi$. 
If an S-R pair starts communication at $t=0$, the value function corresponding to policy $\pi$ with initial state $S$ is defined by 
\begin{align}
V_{\pi}(S)=E_{\pi}\left[\sum_{t=0}^\infty\gamma^{t}U_{t+1}(S_t,A_t) \bigg\rvert  S_0=S\right],\label{eq:2}
\end{align}
where $0<\gamma<1$ is the discount factor, the parameter that captures the importance of the current reward compared to the future rewards. Hence, $V_{\pi}(S)$ corresponds to the expected discounted sum of rewards if the S-R pair determines its action based on policy $\pi$ when the initial state $S_0$ is $S$. Similarly, the action-value function corresponding to policy $\pi$, starting with  state $S$ and action $A$, is defined by 
\begin{align}
Q_{\pi}(S,A)=E_{\pi}\left[\sum_{t=0}^\infty\gamma^{t}U_{t+1}(S_t,A_t) \bigg\rvert  S_0=S,A_0=A\right].\label{eq:3}
\end{align}
The value function and the action-value function can be derived using Bellman expectation equation \cite{BELLMAN:1960}.

The maximum value function $V_{\pi}(S)$ and the maximum action-value function $Q_{\pi}(S,A)$ over all possible policies $\pi$ are denoted by $V^*(S)$ and $Q^*(S,A)$, respectively. They have the following relationship:  
\begin{align}
V^*(S)= \max_{A\in{\mathcal{A}}}Q^*(S,A). \label{eq:4} 
\end{align}
If policy $\pi^*$ satisfies $V_{\pi^*}(S)\geq V_{\pi}(S)$ for any policy $\pi$ and state $S\in\mathcal{S}$, then $\pi^*$  is said to be optimal.  An optimal policy $\pi^*$ also satisfies $Q_{\pi^*}(S,A)\geq Q_{\pi}(S,A)$ for any $\pi$, $S\in\mathcal{S}$, and $A\in\mathcal{A}$. It can be easily checked that if a  policy $\pi^*$ satisfies the following  for any $S\in\mathcal{S}$, it is optimal: 
\begin{align}
\pi^*(S)\in \argmax_{A\in{\mathcal{A}}}Q^*(S,A). \label{eq:5} 
\end{align}
Hence, we can find the optimal frequency hopping policy from the maximum action-value function $Q^*(S,A)$.  A standard approach to derive $Q^*(S,A)$ is to use the following Bellman optimality equation \cite{BELLMAN:1960}:
\begin{align}
\begin{split}\label{eq:6} 
&Q^*(S_t,A_t)\\
&\overset{(a)}=U_{t+1}(S_t,A_t)+\gamma\sum_{S_{t+1}\in\mathcal{S}}p(S_{t+1}|S_t,A_t)V^*(S_{t+1})
\end{split}\\
\begin{split}\label{eq:7} 
&\overset{(b)}=U_{t+1}(S_t,A_t)\\
&~~+\gamma\sum_{S_{t+1}\in\mathcal{S}}\max_{A_{t+1}\in\mathcal{A}}p(S_{t+1}|S_t,A_t)Q^*(S_{t+1},A_{t+1}),
\end{split}
\end{align}
where $p(S_{t+1}|S_t,A_t)$ is from the probabilistic transition function $\mathcal{P}$. Here, $(a)$ is since the maximum action value function can be expressed by using the next reward and the probabilistic sum of the next maximum value functions and $(b)$ is due to  \eqref{eq:4}. 
However, Bellman optimality equation does not have a closed form solution due to the non-linearity. Instead, we can derive $Q^*(S,A)$ in an iterative way by using value iteration \cite{BELLMAN:1960}. If the environment (the transition function) of the MDP is not fully known to the S-R pairs, one possible approach to obtain $Q^*(S,A)$  is to use the Q-learning \cite{Watkins:1992}, which approximates the unknown transition function in an empirical way.

Note that the starting state $S_0$ is a random variable for our model, where $S_0\in\{1,J,I\}$. Hence, as a criterion for evaluating a  jamming strategy or a frequency  hopping policy, we use the expected discounted sum of rewards (EDSR) corresponding to the policy $\pi$ defined by  
\begin{align}
\bar{U}_\pi=E_{S_0}\left[E_{\pi}\left[\sum_{t=0}^\infty\gamma^{t}U_{t+1}(S_t,A_t)\bigg\rvert S_0\right]\right].\label{eq:8}
\end{align}
Note that the maximum EDSR 
%$\bar{U}^*$
is achieved by an optimal policy $\pi^*$. 

\section{Arms race in single-user environment}\label{sec4}
In this section, we continue the arms race by characterizing an optimal frequency hopping policy against the basic sweep jamming described in Section \ref{sec2C} for the single-user case. We note that for the single-user case, the reactive sweep jamming induces the same MDP environment with the basic sweep jamming, which will be proved in  the proof of Theorem \ref{thm1}.
%In the sensing-reactive sweep jamming, the jammer uses the same pattern for the sweeping, as in the basic sweep jamming, unless it detects a communication, but once it detect a communication, it newly starts sweeping with a randomly updated pattern. 
In this section, we show that the arms race reaches an equilibrium after the round of finding an optimal frequency hopping policy against the basic or the reactive sweep jamming.

First, the following theorem shows an optimal frequency hopping policy against the basic or the reactive sweep jamming.\footnote{The result of Theorem \ref{thm1} was previously shown for the basic sweep jamming in \cite[Proposition~1]{Wu:2012}.} 
\begin{theorem}\label{thm1}
For the single-user case, an optimal frequency hopping policy $\pi^*$ against the basic or the reactive sweep jammer is a $K^*$-staying policy for some threshold $K^*\in[0:T-1]$, i.e.,
\begin{align}
\pi^*(S)=
 \begin{cases}
 s \mathrm{\ \ \ for \ } S\in[1:K^*],\\
 h \mathrm{\ \ \ otherwise }.\label{eq:9}
 \end{cases}
\end{align}
The threshold $K^*$ is called the optimal staying-threshold. Note that if $K^*=0$, the set $[1:K^*]$ is an empty set and hence the policy is to hop  always.
\end{theorem}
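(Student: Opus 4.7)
The plan is to set up the MDP induced by basic sweep jamming in the single-user setting, argue equivalence to reactive sweep, observe that hopping carries a $Q$-value independent of $S$, and finally show that $Q^*(S,s)$ is monotone non-increasing in $S$; the threshold structure in~\eqref{eq:9} then drops out immediately.

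First, I would pin down the transition kernel. Under basic sweep the jammer's schedule is a fixed periodic permutation with period $T$, and a random hop places the user uniformly in the sweep cycle. Conditioning on $S_t=K$ consecutive past successes rules out exactly $K$ of the $T$ cycle positions, so the next-slot scan probability equals $1/(T-K)$; in particular $K_m=T-1$. Hopping sends $S_{t+1}\in\{J,1\}$ with probabilities $1/T$ and $(T-1)/T$ respectively, independent of $S_t$, and the state $I$ is unreachable for a single user. For reactive sweep, each detection-triggered random restart is, from the user's viewpoint, equivalent to a uniform re-placement of the user after its next hop, so the user-side kernel and rewards coincide with those of basic sweep, and it suffices to treat the basic sweep case.

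Since hopping resets the channel regardless of origin, $Q^*(S,h)$ depends on $S$ only via the immediate reward $U_{t+1}(S,h)$; for $S\in[1:T-1]$ it equals the $S$-independent constant $(R-C)+\gamma\bigl[\tfrac{1}{T}V^*(J)+\tfrac{T-1}{T}V^*(1)\bigr]$. The central step is to show that $Q^*(S,s)$ is non-increasing in $S\in[1:T-1]$. I would establish this by value iteration from $V_0\equiv 0$, maintaining inductively the two invariants (i) $V_k$ is non-increasing in $S$ on $[1:T-1]$, and (ii) $V_k(J)\leq V_k(S)$ for every $S\in[1:T-1]$. Invariant (ii) follows from the immediate-reward gap $Q_k(S,h)=Q_k(J,h)+(R+L)$, whence $V_k(S)\geq Q_k(S,h)\geq V_k(J)+(R+L)$. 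Feeding (i)--(ii) into the one-step recursion
\[
Q_{k+1}(S,s)=R+\gamma\Bigl[\tfrac{T-S-1}{T-S}V_k(S+1)+\tfrac{1}{T-S}V_k(J)\Bigr],
\]
with boundary $Q_{k+1}(T-1,s)=R+\gamma V_k(J)$, a short algebraic check gives $Q_{k+1}(S,s)\geq Q_{k+1}(S+1,s)$: increasing $S$ both shifts $V_k(S+1)$ down to $V_k(S+2)$ by~(i) and transfers weight onto the smaller $V_k(J)$ by~(ii). Taking the max with the $S$-constant $Q_{k+1}(S,h)$ preserves monotonicity, so $V_{k+1}$ satisfies~(i); re-deriving (ii) at step $k+1$ as above closes the induction. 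Passing $k\to\infty$ yields the claimed monotonicity of $V^*$ and of $Q^*(\cdot,s)$.

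With $Q^*(S,s)-Q^*(S,h)$ non-increasing on $[1:T-1]$, the stay-region $\{S\in[1:T-1]:Q^*(S,s)\geq Q^*(S,h)\}$ is an initial segment $[1:K^*]$ for some $K^*\in[0:T-1]$ (empty if always-hop is optimal, full if minimal-hopping is optimal). Combined with $\pi^*(J)=\pi^*(I)=h$, which are the only allowed actions in those states, this is exactly the $K^*$-staying policy of~\eqref{eq:9}. I expect the main obstacle to be the coupled induction: both invariants must be propagated together, and the algebraic comparison between $Q_{k+1}(S,s)$ and $Q_{k+1}(S+1,s)$ must simultaneously handle the downward shift $V_k(S+1)\to V_k(S+2)$ and the redistribution of mass onto $V_k(J)$ as $S$ grows.
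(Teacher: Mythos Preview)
Your proposal is correct and arrives at the same threshold structure, but the route differs from the paper's. The paper works directly with the optimal $Q^*$: after establishing that $Q^*(K,h)$ is constant in $K$ and the identity $Q^*(J,h)=Q^*(K,h)-(R+L)$, it proves $Q^*(K,s)$ is \emph{strictly} decreasing by backward induction on the state index, starting from the base case $Q^*(T-2,s)>Q^*(T-1,s)$ and then showing $Q^*(K-1,s)>Q^*(K,s)\Rightarrow Q^*(K-2,s)>Q^*(K-1,s)$ via the Bellman optimality equation. You instead run value iteration from $V_0\equiv 0$ and propagate two structural invariants, (i) $V_k$ non-increasing on $[1{:}T-1]$ and (ii) $V_k(J)\le V_k(S)$, through each Bellman backup, then take $k\to\infty$. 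Both arguments hinge on the same two ingredients (the $R+L$ gap between $Q^*(\cdot,h)$ at $J$ versus at $K$, and the increase of the scan probability $1/(T-K)$ in $K$), so the algebra you flag as the ``main obstacle'' is essentially the same computation the paper carries out in \eqref{eq:18}--\eqref{eq:28}, just organized differently. The paper's state-wise induction is slightly shorter and yields strict monotonicity in one pass; your value-iteration framing is more constructive, mirrors how one would actually compute $K^*$, and the invariant template ports cleanly to other MDPs with monotone transition structure. Either way, non-increasing suffices for the threshold conclusion, so nothing is lost.
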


\begin{proof}
Let us first state the transition functions $p(S_{t+1}|S_t,A_t)$ in the presence of the basic or the  reactive sweep jammer. Note that for the single-user case, the set $\mathcal{S}$ of states is given as $\{J, 1,2,\cdots, T-1\}$, i.e., there is no state $I$ because a collision occurs only if there are multiple S-R pairs. For the basic sweep jamming, it is apparent that $K_m=T-1$ since it scans all the $M$ channels in any window of $T$ consecutive time slots. Note that this $K_m$ remains the same for the reactive sweep jamming because its operation is the same with the basic sweep jamming until it finds the S-R pair.

Now, the transition functions when $A_t=h$ are given as follows for both the basic and the reactive sweep jamming strategies:
\begin{align}
p(J|S,h)&={m\over M}, \label{eq:10}\\
p(1|S,h)&={{M-m}\over M},\label{eq:11}
\end{align}
for any $S\in\mathcal{S}$, since $m$ channels out of all $M$ channels are jammed in each time $t$. 
The transition functions  when $A_t=s$ are given as follows for both the jamming strategies:\footnote{Note that when $S_t=J$, $A_t=h$. Hence, we do not consider $S_t=J, A_t=s$.} 
\begin{align}
 p(J|K,s)&={m\over {M-Km}}, \label{eq:12}\\
 p(J|T-1,s)&=1, \label{eq:12.1}\\
 p(K+1|K,s)&=1-{m\over {M-Km}},\label{eq:13}
\end{align}
for $K\in[1:T-2]$, since the jammer already scanned other $mK$ channels during $K$ consecutive successful communications of the S-R pair, and it attacks $m$ channels out of remaining $M-mK$ channels.

To derive an optimal policy based on \eqref{eq:5}, we need to derive the maximum action-value function $Q^*(S,A)$. Since it is in general difficult to obtain a closed-form solution of $Q^*(S,A)$, we prove the theorem from some monotonicity properties of $Q^*(S,A)$. Note that the maximum action-value functions and the optimal policies are the same for both the jamming strategies because their corresponding transition functions are the same.

First we show that $Q^*(K,h)$ is a constant regardless of $K\in [1:T-1]$.  Note that $Q^*(K,h)$ for $K\in[1:T-1]$ is given as 
\begin{align}
Q^*(K,h)&\overset{(a)}=R-C+\gamma(p(J|K,h)V^*(J)+p(1|K,h)V^*(1))\label{eq:14}\\
&\overset{(b)}=R-C+\gamma(p(J|1,h)V^*(J)+p(1|1,h)V^*(1)),\label{eq:15}
\end{align}
where $(a)$ is from the Bellman optimality equation and $(b)$ is since the transition function \eqref{eq:10}-\eqref{eq:11} for $A_t=h$ does not depend on the current state. Hence, $Q^*(K,h)$ has the same value regardless of $K\in[1:T-1]$. 

Similarly, $Q^*(J,h)$ can be written as  
\begin{align}
Q^*(J,h)&=-L-C+\gamma(p(J|1,h)V^*(J)+p(1|1,h)V^*(1)).\label{eq:16}
\end{align}
By subtracting \eqref{eq:15} from \eqref{eq:16}, the following equation holds for $K\in[1:T-1]$:
\begin{align}
Q^*(J,h)=Q^*(K,h)-R-L.\label{eq:17}
\end{align}

Now, let us show that $Q^*(K,s)$ is strictly decreasing in $K\in [1:T-1]$ by induction. To that end, we first show $Q^*(T-2,s)>Q^*(T-1,s)$ as follows:  
\begin{align}
\begin{split}
&Q^*(T-1,s)-Q^*(T-2,s)\\
\overset{(a)}=&\gamma\sum_{S_{t+1}\in\mathcal{S}}(p(S_{t+1}|T-1,s)-p(S_{t+1}|T-2,s))V^*(S_{t+1})
\end{split}\label{eq:18}\\ 
\begin{split}
=&\gamma(V^*(J)-p(T-1|T-2,s)V^*(T-1)\\
&-p(J|T-2,s)V^*(J))
\end{split}\label{eq:19}\\
=&\gamma\cdot p(T-1|T-2,s)(V^*(J)-V^*(T-1))\label{eq:20}\\
\overset{(b)}\leq&\gamma\cdot p(T-1|T-2,s)(V^*(J)-Q^*(T-1,h))\label{eq:21}\\
\overset{(c)}=&\gamma\cdot p(T-1|T-2,s)(-R-L)\label{eq:22}\\
<&0, \label{eq:23}
\end{align}
where $(a)$ is from the Bellman optimality equation, $(b)$ is due to \eqref{eq:4}, and $(c)$ is by \eqref{eq:17} and $Q^*(J,h)=V^*(J)$. 

Next, under the assumption $Q^*(K-1,s)>Q^*(K,s)$ for $K\in[3:T-1]$, $Q^*(K-2,s)>Q^*(K-1,s)$ can be proved as follows:   
\begin{align}
\begin{split}\label{eq:24}
Q&^*(K-1,s)-Q^*(K-2,s)\\
\overset{(a)}=&\gamma\sum_{S_{t+1}\in\mathcal{S}}(p(S_{t+1}|K-1,s)-p(S_{t+1}|K-2,s))V^*(S_{t+1})
\end{split}\\  
\begin{split}\label{eq:25}
=&\gamma(p(K|K-1,s)V^*(K)+p(J|K-1,s)V^*(J)\\
&-p(K-1|K-2,s)V^*(K-1)-p(J|K-2,s)V^*(J))
\end{split}\\
\begin{split}\label{eq:26}
=&\gamma\cdot(((p(K-1|K-2,s)-p(K|K-1,s))(V^*(J)\\
&-V^*(K-1))+p(K|K-1,s)(V^*(K)-V^*(K-1)))
\end{split}\\
\overset{(b)}<&\gamma\cdot p(K|K-1,s)(V^*(K)-V^*(K-1))\label{eq:27}\\
\overset{(c)}\leq&0, \label{eq:28}
 \end{align}
where $(a)$ is by Bellman optimality equation, $(b)$ is by \eqref{eq:4} and \eqref{eq:17}, and $(c)$ is because
$Q^*(K-1,s)>Q^*(K,s)$ from the assumption and $Q^*(K-1,h)=Q^*(K,h)$. By induction, we conclude that $Q^*(K,s)$ is strictly decreasing in $K$.

Since $Q^*(K,h)$ and $Q^*(K,s)$ are constant and strictly decreasing in $K$, respectively, there exists a threshold $K^*\in[1:T-1]$ such that $Q^*(K,s)\geq Q^*(K,h)$ for $K\in[1:K^*]$ and  $Q^*(K,s)\leq Q^*(K,h)$ for  $K\in[K^*+1:T-1]$, unless $Q^*(1,s)< Q^*(1,h)$. If  $Q^*(1,s)< Q^*(1,h)$, let $K^*=0$. From \eqref{eq:5}, we conclude that an optimal frequency hopping policy is staying for $K\in[1:K^*]$ and hopping for $K\in[K^*+1:T-1]$, with the interpretation that if $K^*=0$, the set $[1:K^*]$ is an empty set and hence the policy is to hop for all $K\in [1:T-1]$.
\end{proof}
The optimality of the $K^*$-staying policy can be explained as follows. The probability that an S-R pair is jammed increases as it stays the same channel longer. Hence, it is better to hop to the next channel if the risk of being jammed by staying the same channel outweighs the cost due to the hopping action. The optimal staying-threshold $K^*$ is a critical state at which staying is more beneficial to the S-R pair than hopping, and it can be obtained by deriving the maximum action-value function through value iteration and then applying \eqref{eq:5}. 

According to Theorem \ref{thm1}, as a next round of the arms race against the basic or the reactive sweep jamming, the S-R pair employs the $K^*$-staying policy. Now we show that the arms race has reached an equilibrium at this stage, by proving that the next round of the arms race does not change the EDSR. To that end, we first show that an optimal jamming strategy against the $K^*$-staying policy is the $K^*$-memory jamming, which is called ``smarter jamming" in \cite{Wu:2012}.
%selecting $m$ target channels uniformly at random out of all $M$ channels except the channels scanned in the recent $K^*$ time slots.

\begin{theorem}\label{thm1.1}
For the single-user case, an optimal jamming strategy against the $K^*$-staying policy with optimal staying-threshold $K^*$ is the $K^*$-memory jamming strategy.
\end{theorem}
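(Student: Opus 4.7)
The plan is to show that the $K^*$-memory jammer simultaneously maximizes every one-step jamming probability that it can influence, and that doing so minimizes the user's EDSR under the $K^*$-staying policy. First I would observe that under the $K^*$-staying policy the user's action is $h$ whenever $S_t \in \{J, K^*+1\}$ and $s$ otherwise, so the only transitions the jammer can affect are the stay-jamming rates $p_K := p(J \mid K, s)$ for $K \in [1:K^*]$. After any hop the user selects its next channel uniformly over all $M$ channels independently of the jammer, so the post-hop jamming probability is $m/M$ for every jamming strategy. Hence the jammer's only degrees of freedom are $p_1,\ldots,p_{K^*}$, and the proof reduces to (i) upper-bounding each $p_K$, (ii) showing that $K^*$-memory attains every bound simultaneously, and (iii) showing that a larger $p_K$ yields a smaller EDSR.

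For (i) and (ii), I would characterize the jammer's posterior on the user's channel conditional on $S_t = K$. Since the user hopped to a uniformly random channel at time $t-K+1$ independently of the jammer's randomness, and reaching state $K$ is equivalent to not having been scanned during $[t-K+1,t]$, the posterior distribution of the user's channel is uniform on the $M - N_K$ channels untouched by the jammer during that window, where $N_K$ is the number of distinct channels scanned. Consequently the hit probability at time $t+1$ is at most $m/(M - N_K) \leq m/(M - K m)$, with equality iff the $K$ scans over the window are pairwise disjoint. The $K^*$-memory jammer makes any $K^*+1$ consecutive scans pairwise disjoint by construction, so for every $K \in [1:K^*]$ its current scan is disjoint from the preceding $K$ scans, and the upper bound $p_K = m/(M - K m)$ is attained simultaneously for all such $K$.

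For (iii), I would analyze the Bellman equations for the $K^*$-staying value function, which form a finite linear system on the states $\{1,\ldots,K_m,J\}$. The key structural inequality is $V(K+1) > V(J)$ wherever the comparison is needed; this follows from equation~\eqref{eq:17}, which gives $Q^*(K,h) - Q^*(J,h) = R + L > 0$, combined with $V(K+1) \geq Q^*(K+1,h)$. Using this inequality, the Bellman operator parametrized by $\mathbf{p}=(p_1,\ldots,p_{K^*})$ is componentwise monotone decreasing in each $p_K$, and a standard contraction-plus-monotonicity argument on value iteration shows that the fixed point $V(\cdot;\mathbf{p})$ is coordinatewise decreasing in $\mathbf{p}$. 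Combining (i), (ii), and (iii) yields that $K^*$-memory minimizes the EDSR among all jamming strategies against the $K^*$-staying user.

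The step I expect to demand the most care is justifying the posterior computation in (i)--(ii) for an arbitrary history-dependent jammer. The argument hinges on the user's uniform hop at time $t-K+1$ being independent of the jammer's entire past and future strategy; one must verify that conditioning on $\{S_t = K\}$ does not leak more information about the user's channel than the pure non-membership event $\{c \notin \mathrm{scans}[t-K+1,t]\}$. A rigorous formulation in terms of conditional distributions on the product of user-hop randomness and jammer history would be required, together with a check that the argument extends cleanly to the infinite-horizon discounted setting rather than relying on a stationary approximation.
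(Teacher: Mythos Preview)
Your argument is correct and shares the paper's overall logic: first assert that increasing the jamming probability at each state--action pair can only lower the EDSR, then show that avoiding the most recently scanned $K^*$ blocks of channels simultaneously maximizes every relevant hit probability. The decomposition differs slightly. The paper works from the jammer's viewpoint and compares the marginal channel-occupation probabilities $P(E(c_\ell))$, splitting over how long ago the user last hopped (the index $\tau$) and showing that unscanned channels dominate scanned ones termwise, with symmetry handling the unscanned block. You instead parametrize by the state-conditional hit rates $p_K=p(J\mid K,s)$, bound each one via the posterior computation, and verify that $K^*$-memory attains all bounds at once. Your route has the advantage of making step~(iii)---the monotonicity of the value function in the jamming probabilities, which the paper states without proof---explicit via a contraction-plus-monotonicity argument on the Bellman operator. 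The paper's route, by working with marginal occupation probabilities rather than conditioning on the unobserved state $S_t=K$, largely sidesteps the delicate independence/conditioning issue you flag in your final paragraph, at the cost of leaving the link between ``most probable channels'' and ``minimal EDSR'' somewhat informal.
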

\begin{proof}
For a given frequency hopping policy, note that the EDSR  decreases as the probability that the S-R pair is jammed increases for each pair of state and action. Hence, an optimal jamming strategy against a frequency hopping policy will scan the most probable set of $m$ channels where the S-R pair would communicate. The jammer will judge such a set of $m$  channels based on the frequency hopping policy of the S-R pair and the history of the previously scanned channels. For the $K^*$-staying policy, note that only the history of the recent $K^*$ time slots is needed since the S-R pair stays in a channel at most $K^*$ consecutive time slots.

Without loss of generality, consider the situation where the jammer chooses the most probable set of $m$ channels at time $K^*+1$, under the assumption that the jammer has scanned channels $\{c_1,c_2,...,c_{K'}\}$ during time slots  $[1:K^*]$, where $K'\leq mK^*$.  We define the following events: 
\begin{itemize}
\item $E(c_j)$: Event that the S-R pair occupies $c_j$ at time slot $K^*+1$.
\item $E_{\tau}(c_j)$ for $\tau\in[0:K^*]$: Event that the S-R pair newly hopped to channel $c_j$  at the end of time $K^*-\tau$ and then has consecutively chosen the staying action and occupied channel $c_j$ from time slot $K^*+1-\tau$ to time slot $K^*+1$. 
\end{itemize}
Now, we claim that $P(E(c_j))\leq P(E(c_i))$ for $j\in [1:K']$ and $i\in [K'+1:M]$. Note that  
\begin{align}
P(E(c_\ell))= \sum_{\tau=0}^{K^*} P(E_{\tau}(c_\ell))\label{eq:28.1}  
\end{align}
for any $\ell\in [1:M]$. For $\tau\in [1:K^*]$, $ P(E_{\tau}(c_j))\leq  P(E_{\tau}(c_i))$  for $j\in [1:K']$ and $i\in [K'+1:M]$, because the jammer scanned channels $[1:K']$ for time slot $[1:K^*]$ and hence it is more difficult for channel $\{c_1,...,c_{K'}\}$ to have consecutive successful communications. For $\tau=0$, $P(E_{\tau}(c_\ell))$ is trivially constant for $\ell\in [1:M]$ since when an S-R pair determines to hop, the next channel is selected uniformly at random among all $M$ channels. Thus, we conclude that $P(E(c_j))\leq P(E(c_i))$ for any $j\in [1:K']$ and $i\in [K'+1:M]$.

Also, $P(E(c_i))$ is a constant for all $i\in [K'+1:M]$ due to the symmetry of the $K^*$-staying policy. Hence, an optimal set of $m$ channels to scan at time  $K^*+1$ is obtained by arbitrarily selecting $m$ channels among channels $\{c_{K'+1},...,c_M\}$. Note that we can always choose non-overlapping $m$ channels in $\{c_{K'+1},...,c_M\}$ since $K'\leq m(T-1)$ and $M=mT$. Hence, we conclude that an optimal jamming strategy against the $K^*$-staying policy is the $K^*$-memory jamming strategy.
\end{proof}
This theorem implies that as a next step of the arms race against the $K^*$-staying policy, the jammer can try the $K^*$-memory jamming. It was numerically shown in \cite{Wu:2012} that the $K^*$-memory jamming decreases the EDSR, compared to the basic sweep jamming, for the $K^*$-staying policy. Here we analytically show that for the $K^*$-staying policy, the EDSR  is the same for the basic or the reactive sweep jamming  and for the $K^*$-memory jamming attacks.

\begin{theorem}\label{thm2}
In the single-user environment, for the $K^*$-staying policy with optimal staying-threshold $K^*$, the EDSRs are all the same for the basic sweep jamming, the reactive sweep jamming, and the $K^*$-memory jamming attacks.
\end{theorem}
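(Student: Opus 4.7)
The plan is to prove Theorem \ref{thm2} by demonstrating that under the $K^*$-staying policy the three jamming strategies induce identical MDP transition functions, so that the value function in \eqref{eq:2}, and hence the EDSR defined in \eqref{eq:8}, coincide in all three scenarios. Because the reward function \eqref{eq:1}, the discount factor $\gamma$, the initial-state distribution, and the policy itself are common across the three cases, matching the one-step transitions immediately yields the desired equality of EDSRs. Accordingly, the task reduces to checking $p(J|S,h)$ and $p(J|K,s)$ for every state reachable under the $K^*$-staying policy; the remaining transitions are forced by $p(J|\cdot,\cdot)+p(\text{success}|\cdot,\cdot)=1$.

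The basic sweep transitions $p(J|S,h)=m/M$ and $p(J|K,s)=m/(M-Km)$ for $K\in[1:K^*]$ are already computed in the proof of Theorem~\ref{thm1}. The reactive sweep case is then handled by the following observation: under the $K^*$-staying policy the S-R pair hops exactly when it is jammed, which is also the only event that triggers a reactive restart. Thus each pattern restart is matched with a uniformly random rehop of the S-R pair, and after the restart the freshly drawn sweep pattern is statistically indistinguishable from the continuation of an ongoing basic sweep as viewed by the S-R pair. The distribution of the jammer's future scans given the S-R pair's state therefore agrees with that under basic sweep, and all one-step transitions match.

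The main step is the $K^*$-memory case. For action $h$ the new channel $C$ is uniform on $\{1,\dots,M\}$ independently of the jammer, so $p(J|S,h)=m/M$ is immediate. For action $s$ at state $K\in[1:K^*]$ at time $t$, let $J_{t-K^*+1},\dots,J_t,J_{t+1}$ denote the relevant scans. By the definition of $K^*$-memory, $J_{t+1}$ is a uniformly random size-$m$ subset of $\{1,\dots,M\}\setminus(J_{t-K^*+1}\cup\cdots\cup J_t)$, and because $K\le K^*$, this set is contained in $\{1,\dots,M\}\setminus(J_{t-K+1}\cup\cdots\cup J_t)$; hence $J_{t+1}$ is automatically an $m$-subset of the complement of the past $K$ scans. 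On the other hand, since $C$ was drawn uniformly at the hopping time $t-K$ and the jammer's scans do not depend on $C$, a Bayes computation shows that the conditional law of $C$ given $S_t=K$ and the jammer's history is uniform over that same complement, whose cardinality is $M-Km$. Conditioning further on $J_{t+1}$ yields $P(C\in J_{t+1}|S_t=K,\text{history},J_{t+1})=m/(M-Km)$, and taking expectation gives $p(J|K,s)=m/(M-Km)$. The boundary case $K=K^*=T-1$ still holds because the complement shrinks to exactly $m$ channels and $p(J|T-1,s)=1$ on both sides.

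With all one-step transitions matched across the three jamming strategies, the value functions $V_\pi$ agree and therefore $\bar U_\pi$ agrees. The most delicate point, and the one I would write out carefully, is the Bayesian step in the $K^*$-memory case: one must keep track of the fact that conditioning on $S_t=K$ restricts $C$ only to avoid the $K$ most recent scans (not all $K^*$), while the memory constraint forces $J_{t+1}$ into the smaller complement of the $K^*$ most recent scans, and then invoke the subset relation between the two complements to turn the problem into a simple counting argument of $m$ marked positions out of $M-Km$.
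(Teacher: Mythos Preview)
Your overall strategy---showing that the three jamming strategies induce identical transition probabilities at every state-action pair visited under the $K^*$-staying policy, so that the Bellman expectation equations and hence the EDSR coincide---is exactly the paper's approach. Your treatment of the $K^*$-memory case via the explicit Bayesian/counting argument is in fact more careful than the paper's, which simply asserts that all three jammers scan non-overlapping channels in any window of $K^*$ consecutive slots and therefore share the transition formulas \eqref{eq:10}--\eqref{eq:13} at the visitable pairs.

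There is, however, a slip in your reactive-sweep paragraph. The claim that ``under the $K^*$-staying policy the S-R pair hops exactly when it is jammed'' is false whenever $K^*<T-1$: the user also hops proactively from state $K^*+1$ without having been jammed, so hops are not in bijection with jammer restarts. What your argument actually needs (and what holds in the single-user case) is only the one-sided implication that every reactive restart is accompanied by a user hop, i.e., no restart can occur while the user is in a staying stretch. That weaker statement suffices: during states $1,\dots,K^*$ the jammer is running an uninterrupted basic sweep, so $p(J|K,s)=m/(M-Km)$ as in \eqref{eq:12}; and for the hopping transitions $p(J|S,h)=m/M$ holds regardless of whether the jammer has just restarted, because the user's new channel is uniform over all $M$ channels independently of the jammer's current scan set. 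Once you correct the direction of the implication and handle the proactive-hop case explicitly, the argument goes through.
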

\begin{proof}
The EDSR for a given pair of frequency hopping policy and jamming strategy can be derived from the Bellman expectation equation \cite{BELLMAN:1960}, which only requires the transition functions associated with \emph{visitable} state-action pairs. For the $K^*$-staying policy, the Bellman expectation equations corresponding to the three jamming strategies require only the transition functions $p(\cdot|J,h)$, $p(\cdot|K,s)$ for $K\in[1:K^*]$, and $p(\cdot|K^*+1,h)$. Note that these transition functions are the same regardless of the jamming strategies. It was shown in Theorem \ref{thm1}  that the transition functions of the basic or the reactive sweep jamming are the same. Furthermore, the transition functions of the basic or the reactive sweep jamming and the $K^*$-memory jamming are the same since they attack non-overlapping channels in any window of  $K^*$ consecutive time slots. Hence, we conclude that the EDSR is the same in the three jamming attacks.
\end{proof}
From Theorems  \ref{thm1}-\ref{thm2}, we can finally obtain the following main theorem on the equilibrium of the arms race for the single-user case:
\begin{theorem} \label{thm:main_single}
For the single-user case, the $K^*$-staying policy with optimal staying-threshold $K^*$ and the basic or the reactive sweep jamming strategy establish an equilibrium. 
\end{theorem}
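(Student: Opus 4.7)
The plan is to verify the equilibrium condition in both directions: that neither the user nor the jammer can unilaterally improve by deviating, using the three preceding theorems as building blocks.

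First I would fix the direction where the jammer's strategy is held at basic (or reactive) sweep jamming and the user varies the frequency hopping policy. This direction is essentially immediate from Theorem \ref{thm1}, which already asserts that the $K^*$-staying policy is an optimal response against the basic or the reactive sweep jammer. Hence the user cannot unilaterally increase the EDSR by deviating from the $K^*$-staying policy.

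Next I would address the reverse direction, where the user is held at the $K^*$-staying policy and the jammer considers alternative strategies. The key chain is: by Theorem \ref{thm1.1}, the $K^*$-memory jamming is an optimal (EDSR-minimizing) response to the $K^*$-staying policy; by Theorem \ref{thm2}, the basic sweep, the reactive sweep, and the $K^*$-memory jamming all induce the same EDSR when the user plays the $K^*$-staying policy. Combining these two facts yields that basic and reactive sweep jamming also attain the minimum EDSR achievable by any jamming strategy against $K^*$-staying, so the jammer has no profitable deviation either.

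Putting the two directions together gives the equilibrium property: neither side has a unilateral improving deviation, so the pair (\emph{$K^*$-staying}, \emph{basic or reactive sweep jamming}) is an equilibrium. I do not anticipate a true obstacle here, since the heavy lifting has already been done in Theorems \ref{thm1}--\ref{thm2}; the only small care needed is to state clearly that ``optimal'' is taken in the minmax sense used throughout the paper (maximizing EDSR for the user, minimizing EDSR for the jammer), so that the equality of EDSRs in Theorem \ref{thm2} correctly transfers the optimality certificate from the $K^*$-memory strategy to the basic and reactive sweep strategies.
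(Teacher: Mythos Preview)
Your proposal is correct and follows essentially the same approach as the paper: the paper's proof likewise invokes Theorem~\ref{thm1} for the user's direction and combines Theorems~\ref{thm1.1} and~\ref{thm2} for the jammer's direction to conclude the equilibrium. Your extra remark that Theorem~\ref{thm2} transfers the optimality certificate from the $K^*$-memory jamming to the basic and reactive sweep jamming is exactly the implicit step the paper uses.
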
 
\begin{proof} 
The proof is immediate from Theorems  \ref{thm1}-\ref{thm2} since 
\begin{enumerate}
\item Theorem \ref{thm1} implies that no frequency hopping policy is better than the $K^*$-staying policy, against the basic or the reactive sweep jamming and 
\item Theorems \ref{thm1.1} and \ref{thm2} mean that no jamming strategy is stronger than the basic or the reactive sweep jamming, against the $K^*$-staying policy. 
\end{enumerate}
\end{proof} 

Fig. \ref{Fig6} illustrates the arms race between the S-R pair and the jammer, where we can check that the $K^*$-memory jammer does not decrease the EDSR, compared to the basic or reactive sweep jammer, against the $K^*$-staying policy. 
The optimal staying-threshold $K^*$ depends on various model parameters. It increases in $M$ and decreases in $m$ since $p(J|K,s)$ decreases in $M$ and increases in $m$ for any $K\in[1:T-1]$. Also, it decreases in both $R$ and $L$ since large $R$ and $L$ mean that the difference in the rewards for the jammed and successful communication is large. Finally, $K^*$ increases in $C$ since large hopping cost means that the received reward when the S-R pair hops to the next channel is small. In Fig. \ref{Fig7}, the tendency of $K^*$ in $L$ and $m$ in the basic or reactive sweep jamming environment is shown.

\begin{figure}
\centering
\includegraphics[width=0.8\columnwidth]{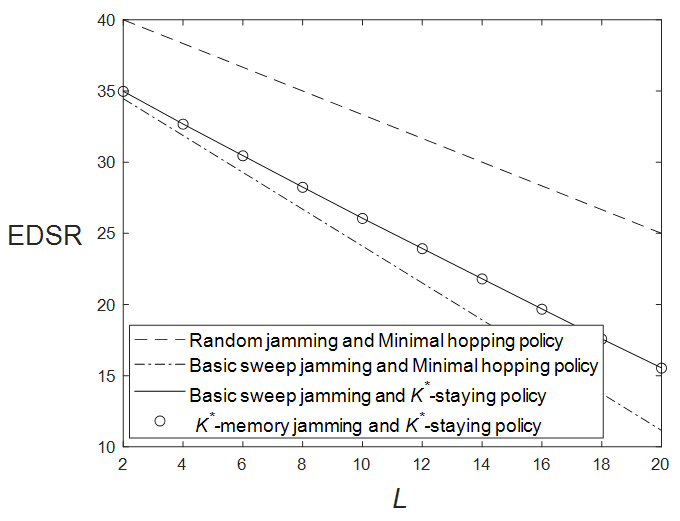}
\caption{The EDSRs according to jamming cost~$L$ in the arms race between the user and the jammer for parameters $R=5$, $C=5$, $n=1$, $M=60$, $m=5$, and $\gamma=0.9$.}\label{Fig6}
\end{figure}

\begin{figure}
\centering
\includegraphics[width=0.8\columnwidth]{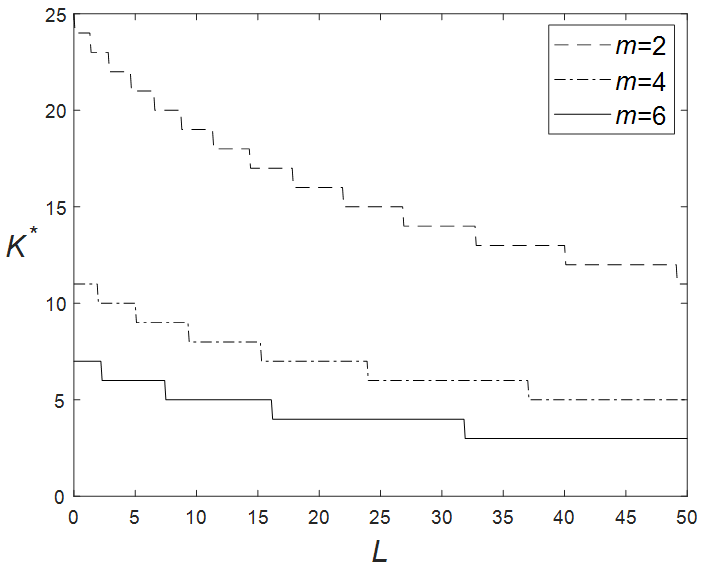}
\caption{Optimal staying-threshold $K^*$ versus jamming cost~$L$ for $m=2,4,6$ for the basic or reactive sweep jamming environment with parameters by $R=5$, $C=5$, $n=1$, $M=60$, and $\gamma=0.9$.}\label{Fig7}
\end{figure}

\begin{remark}\label{rmk1}
Note that the jammer can choose either the basic or the reactive sweep jamming attacks at the equilibrium, i.e., both result in the same EDSR. However, for the basic sweep jamming, the S-R pair can eventually learn the sweep pattern after a sufficiently long time. If the S-R pair gets to know the sweep pattern, it can significantly increase the EDSR by staying a channel as long as possible and hopping to other channels right before being jammed. In the reactive sweep jamming, it is difficult for the S-R pair to learn the sweep pattern since the sweep pattern is initialized frequently. Hence, for the single-user case, it is preferable for the jammer to choose the reactive sweep jamming.
\end{remark}

\section{Arms race in multi-user environment}\label{sec5}
In this section, we consider the multi-user case $(n \geq 2)$ where two or more S-R pairs communicate in the presence of a jammer. We continue the arms race by analyzing an optimal frequency hopping policy against the basic sweep jamming described in Section \ref{sec2C}.\footnote{For the single-user case, the reactive sweep jamming and the basic sweep jamming induce the same MDP environment as shown in Section \ref{sec4}. In contrast, for the multi-user case with $n\geq 2$, the reactive sweep jamming  induces a different  MDP environment compared to the basic sweep jamming, as explained in Section~\ref{sec6}.} 

In multi-user environment, each S-R pair has an inactive probability $0\leq\theta\leq 1$, defined as the probability that an S-R pair who just hops to a channel is not allowed to communicate in the communication phase. The inactive probability of an S-R pair depends on various factors including frequency hopping policy, collision avoidance protocol, and jamming strategy. Note that  an inactive probability affects the MDP environment from which an optimal frequency hopping policy can be derived, and in turn, the updated frequency hopping policy affects the inactive probability. Such a recurrent relationship makes the problem of finding  an optimal policy $\pi^*$ against the basic sweep jamming non-trivial, and  necessitates an iterative approach. Hence, to find an optimal policy $\pi^*$ and the resultant inactive probability $\theta^*$ against the basic sweep jamming for a given collision avoidance protocol, we start with an initial pair $(\pi_1, \theta_1)$, where $\pi_1$ is a $K$-staying policy with an arbitrarily chosen staying-threshold and $\theta_1$ is the corresponding inactive probability. Next, we can find a frequency hopping policy $\pi_2$ from Bellman optimality equation based on the transition function assuming $\theta_1$, and let $\theta_2$ denote the inactive probability corresponding to $\pi_2$. This iterative update proceeds and let $f$ and $g$ denote the update functions, i.e., $\pi_{i+1}=f(\theta_i)$ and $\theta_{i+1}=g(\pi_{i+1})$. Note that $\pi^*$ is an optimal policy if $\pi^*=f\circ g (\pi^*)$. 

The following theorem characterizes $\pi=f(\theta)$ against the basic sweep jamming, for any inactive probability $\theta$. 
\begin{theorem}\label{thm3}
For the basic sweep jamming, an optimal frequency hopping policy $\pi=f(\theta)$ under the environment described by an inactive probability $\theta$ is a $K(\theta)$-staying policy for some threshold $K(\theta)\in[0:T-1]$, i.e., 
\begin{align}
\pi(S)=
 \begin{cases}
 s \mathrm{\ \ \ for \ } S\in[1:K(\theta)]\\
 h \mathrm{\ \ \ otherwise }
 \end{cases}.\label{eq:29}
\end{align}
Note that if $K(\theta)=0$, the set $[1:K(\theta)]$ is an empty set and hence the policy is to hop always. 
\end{theorem}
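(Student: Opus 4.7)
The plan is to mimic the structure of the proof of Theorem \ref{thm1}, adapting the MDP environment to account for the inactive probability $\theta$ that a newly hopped S-R pair is not allowed to communicate. The state space is now $\mathcal{S}=\{J,I,1,\ldots,T-1\}$ with $K_m=T-1$ inherited from the basic sweep jammer, and my first step is to write down the transition functions. Since the collision avoidance phase is triggered only after a hop, $\theta$ affects only the action $h$: for any $S\in\mathcal{S}$,
\begin{align*}
p(I|S,h)=\theta,\quad p(J|S,h)=(1-\theta)\tfrac{m}{M},\quad p(1|S,h)=(1-\theta)\tfrac{M-m}{M}.
\end{align*}
The stay transitions are untouched, so $p(J|K,s)=\tfrac{m}{M-Km}$ and $p(K+1|K,s)=1-\tfrac{m}{M-Km}$ for $K\in[1:T-2]$ and $p(J|T-1,s)=1$, matching \eqref{eq:12}--\eqref{eq:13}. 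The crucial qualitative fact I want to preserve is that the hop transition still does not depend on the current state.

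Next I would show that $Q^*(K,h)$ is constant in $K\in[1:T-1]$. This is immediate: applying the Bellman optimality equation and using the state-independent hop transitions collapses $Q^*(K,h)$ to the same expression for every $K$, exactly as in \eqref{eq:14}--\eqref{eq:15}. Subtracting the analogous expressions at $S=J$ and $S=I$ from this common value, and reading off the immediate rewards from \eqref{eq:1}, yields the useful identities
\begin{align*}
Q^*(J,h)=Q^*(K,h)-R-L,\qquad Q^*(I,h)=Q^*(K,h)-R,
\end{align*}
which replace \eqref{eq:17} and, combined with $V^*(J)=Q^*(J,h)$ and $V^*(I)=Q^*(I,h)$, give the sign information needed below.

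The main step is to show that $Q^*(K,s)$ is strictly decreasing in $K\in[1:T-1]$ by induction, in parallel with \eqref{eq:18}--\eqref{eq:28}. For the base case, $Q^*(T-1,s)-Q^*(T-2,s)$ reduces, via the Bellman optimality equation, to a positive multiple of $V^*(J)-V^*(T-1)\le V^*(J)-Q^*(T-1,h)=-R-L<0$. For the inductive step, $Q^*(K-1,s)-Q^*(K-2,s)$ expands into a positive combination of $V^*(K)-V^*(K-1)$ and $V^*(J)-V^*(K-1)$; the former is nonpositive by the induction hypothesis together with $Q^*(K-1,h)=Q^*(K,h)$, and the latter is strictly negative by the identity above. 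The one new wrinkle relative to Theorem~\ref{thm1} is the presence of state $I$; I expect this to be harmless because the stay transitions never hit $I$, so $V^*(I)$ simply does not appear in the stay-side Bellman expansions, and the induction goes through verbatim.

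With $Q^*(K,h)$ constant and $Q^*(K,s)$ strictly decreasing in $K$, there exists $K(\theta)\in[0:T-1]$ such that $Q^*(K,s)\ge Q^*(K,h)$ for $K\in[1:K(\theta)]$ and $Q^*(K,s)<Q^*(K,h)$ for $K\in[K(\theta)+1:T-1]$, with the convention $K(\theta)=0$ when $Q^*(1,s)<Q^*(1,h)$. Applying \eqref{eq:5} then gives the $K(\theta)$-staying form in \eqref{eq:29}. The main obstacle, as I see it, is purely bookkeeping: verifying that the additional state $I$ and the $\theta$-weighted hop transitions do not spoil the monotonicity argument of Theorem~\ref{thm1}; the preceding remark about stay transitions never producing $I$ is what makes the argument robust.
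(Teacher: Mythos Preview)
Your proposal is correct and follows essentially the same approach as the paper's own proof: write the $\theta$-modified hop transitions (which remain state-independent), leave the stay transitions unchanged, deduce that $Q_\theta^*(K,h)$ is constant and that \eqref{eq:17} still holds, and then note that the strict-decrease induction for $Q_\theta^*(K,s)$ carries over verbatim because the stay action never leads to state $I$. The only cosmetic difference is that you also record the extra identity $Q^*(I,h)=Q^*(K,h)-R$, which the paper does not need and does not state.
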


\begin{proof}
Similar to the proof of Theorem \ref{thm1}, we start by specifying the transition function with respect to each S-R pair. In multi-user environment, the set $\mathcal{S}$ of states for each S-R pair includes state $I$ since a collision occurs if more than or equal to two S-R pairs exist in the same channel. The transition function of each S-R pair for $A_t=h$ is given as follows:
\begin{align}
p_\theta(I|S,h)&=\theta, \label{eq:30}\\
p_\theta(J|S,h)&=(1-\theta){m\over M}, \label{eq:31}\\
p_\theta(1|S,h)&=(1-\theta){{M-m}\over M},\label{eq:32}
\end{align}
for any $S\in\mathcal{S}$. Here, \eqref{eq:30}-\eqref{eq:32} is proved similarly as \eqref{eq:10}-\eqref{eq:11}, except that an S-R pair who just hops to a channel get the chance to communicate in the communication phase with probability $1-\theta$. The transition function for $S_t=K\in [1:T-1]$ and $A_t=s$ is the same with \eqref{eq:12}-\eqref{eq:13} because any state does not transit to state $I$ if it could take the staying action, i.e., an S-R pair can take the staying action only if its communication was successful and hence  it has the priority to communicate in the communication phase if it stays at the same channel.  

A policy obtained from Bellman optimality equation and \eqref{eq:5} using the described transition function can be derived similarly as Theorem \ref{thm1}, but now we need to consider state $I$. We prove the theorem from some monotonicity properties of the maximum action-value function for the given $\theta$, $Q_\theta^*(S,A)$. We first show that $Q_\theta^*(K,h)$ is a constant regardless of $K\in [1:T-1]$.  Note that $Q_\theta^*(K,h)$ for $K\in[1:T-1]$ and $Q_\theta^*(J,h)$ are given as 
\begin{align}
Q_\theta^*(K,h)&=R-C+\gamma\!\!\!\!\!\!\sum_{S_{t+1}\in\{1,J,I\}}\!\!\!\!\!\!p_\theta(S_{t+1}|1,h)V_\theta^*(S_{t+1}),\label{eq:33}\\
Q_\theta^*(J,h)&=-L-C+\gamma\!\!\!\!\!\!\sum_{S_{t+1}\in\{1,J,I\}}\!\!\!\!\!\!p_\theta(S_{t+1}|1,h)V_\theta^*(S_{t+1}),\label{eq:34}
\end{align}
respectively, which are proved similarly as in \eqref{eq:14}-\eqref{eq:15}. By \eqref{eq:33}, $Q_\theta^*(K,h)$ is a constant regardless of $K\in [1:T-1]$. By subtracting \eqref{eq:33} from \eqref{eq:34}, \eqref{eq:17} also holds in multi-user environment for $K\in[1:T-1]$.

The strictly decreasing property of $Q_\theta^*(K,s)$ in $K\in[1:T-1]$ can be proved in the exactly same way as in the proof of Theorem \ref{thm1}, because any state does not transit to state $I$ in the staying action and \eqref{eq:17}, used to prove strictly decreasing property of $Q_\theta^*(K,s)$, also holds in multi-user environment. Since $Q_\theta^*(K,h)$ and $Q_\theta^*(K,s)$ are constant and strictly decreasing in $K$, respectively, a policy obtained by Bellman optimality equation in inactive probability $\theta$ is the $K(\theta)$-staying policy with threshold $K(\theta)\in[0:T-1]$.
\end{proof}

This theorem implies that for any inactive probability $\theta$, an optimal frequency hopping policy against the basic sweep jamming is a $K(\theta)$-staying policy. Now, the following theorem shows that an optimal policy $\pi^*$ and the resultant  inactive probability $\theta^*$, i.e., $\pi^*=f(\theta^*)$ and $\theta^*=g (\pi^*)$, can be obtained through a finite number of iterations for any initial $K$-staying policy $\pi_1$.

\begin{theorem}\label{thm4}
For any initial $K$-staying policy $\pi_1$ with arbitrarily chosen $K\in[0:T-1]$, there exists $s\in [0:T-1]$ such that 
\begin{align}
(f\circ g)^s(\pi_1)=(f\circ g)^{s+1}(\pi_1), \label{eq:35}
\end{align}
which implies that an optimal frequency hopping policy against the basic sweep jamming is a $K^*$-staying policy, and the optimal staying-threshold $K^*$ can be obtained by at most $T-1$ iterative updates. 
\end{theorem}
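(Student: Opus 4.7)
The plan is to reduce the claim to a monotonicity property of an induced map on the finite set of staying-thresholds. By Theorem~\ref{thm3}, for every inactive probability $\theta$ the policy $f(\theta)$ is a $K(\theta)$-staying policy with $K(\theta) \in [0:T-1]$, so starting from any $K$-staying policy $\pi_1$ the orbit $\pi_i = (f \circ g)^{i-1}(\pi_1)$ stays inside the set of $K$-staying policies and can be encoded by a sequence of integer thresholds $K_1, K_2, \ldots$ drawn from $[0:T-1]$. The claim \eqref{eq:35} is then equivalent to saying that this integer sequence becomes constant after at most $T-1$ updates, so the question reduces to studying the induced map $h : [0:T-1] \to [0:T-1]$ defined by $h(K) := $ threshold of $(f \circ g)(K\text{-staying policy})$.

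The core step is to prove that $h$ is monotone. For the $f$ side, I would extend the Bellman-equation analysis from the proof of Theorem~\ref{thm3}: increasing $\theta$ only reweights the post-hop kernel \eqref{eq:30}--\eqref{eq:32} while leaving the stay-action kernel \eqref{eq:12}--\eqref{eq:13} intact, and a direct comparison of how $Q_\theta^*(K,s)$ and $Q_\theta^*(K,h)$ respond to this reweighting should show that the crossing threshold $K(\theta)$ is monotone in $\theta$. For the $g$ side, I would compute the inactive probability of a freshly hopping user as a function of the common threshold $K$ via a steady-state occupancy argument under the collision-avoidance protocols of Section~\ref{sec5}. A larger $K$ increases the stationary fraction of users already occupying channels while decreasing the mass of users hopping at any instant; showing that these combine monotonically in $K$ and composing with the first monotonicity yields monotonicity of $h$.

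With $h$ monotone on the chain $\{0, 1, \ldots, T-1\}$ of size $T$, the orbit $K_1, h(K_1), h(h(K_1)), \ldots$ is itself monotone (non-decreasing if $h(K_1) \geq K_1$, non-increasing otherwise) and bounded, hence stabilizes at a fixed point after at most $T-1$ steps, since any strictly monotone integer sequence confined to a $T$-element set has length at most $T$. The resulting fixed point $\pi^* = (f \circ g)(\pi^*)$ is MDP-optimal in the environment it induces, and by Theorem~\ref{thm3} it has the form of a $K^*$-staying policy, which establishes the theorem.

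The main obstacle will be the monotonicity of $g$: unlike $f$, which inherits a clean structure from the MDP, $g$ depends on the stationary distribution of the full multi-user system under a symmetric $K$-staying policy, and the two competing effects described above may require a protocol-specific computation. Should direct monotonicity fail for some collision-avoidance protocol, a fallback is that, since $f(\theta)$ is uniquely determined by $\theta$, any cycle of length greater than one in the orbit would force two distinct thresholds to be jointly optimal against the same environment, contradicting the strict ordering of $Q_\theta^*(K,s)$ established in Theorem~\ref{thm3}; this rules out nontrivial cycles and the $T-1$ bound then follows from pigeonhole on $[0:T-1]$.
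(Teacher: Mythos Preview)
Your main argument---reducing to a monotone self-map on the finite chain $[0:T-1]$ by showing that $K(\theta)$ is increasing in $\theta$ and that the inactive probability $g$ is increasing in the common threshold $K$---is exactly the paper's approach; the paper phrases it dually as $g\circ f$ being an increasing step function on $[0,1]$, but the content and the level of rigor (both arguments for monotonicity of $g$ are heuristic occupancy arguments) are identical, and the conclusion that a monotone orbit in a $T$-element set stabilizes within $T-1$ steps is the same.

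One caution: your fallback argument is not valid as stated. A nontrivial $2$-cycle $K_1\mapsto K_2\mapsto K_1$ under $h$ only says that $K_2$ is optimal in the environment $\theta_1=g(K_1\text{-policy})$ and $K_1$ is optimal in the environment $\theta_2=g(K_2\text{-policy})$; these are \emph{different} environments (if $\theta_1=\theta_2$ then $K_1=K_2$ by determinism of $f$), so no two distinct thresholds are forced to be optimal against the same $\theta$, and the strict monotonicity of $Q_\theta^*(\cdot,s)$ is not contradicted. Fortunately the monotonicity route goes through and the fallback is never invoked.
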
          

\begin{proof}
Let us first show that $g\circ f$ is an increasing step function under the assumption that  $g\circ f$ is an increasing function. This assumption will be justified at the end of this proof. Since a policy induced by an update function $f$ is the $K$-staying policy with staying-threshold $K\in[0:T-1]$ as shown in Theorem~\ref{thm3}, the cardinality of set $\{g\circ f(\theta)|0 \leq\theta\leq 1\}$ is equal to or less than $T$. Hence, $g\circ f$ is an increasing step function where the number of the steps is upper-bounded by $T$. We assume that the step function has the following $i\leq T$ steps: $[1,\theta'_{1}),[\theta'_{1},\theta'_{2}),...,[\theta'_{i},1]$. Step $[\theta'_{k-1},\theta'_{k})$ $(k\leq i)$ corresponds to policy $\pi'_k$ and the different step induces the different policy.

Next, we derive that an optimal frequency hopping policy is obtained through less than $T$ updates of a pair $(\pi,\theta)$ by using $f(\theta)$ and $g(\pi)$. Let the update start in $(\pi_1,\theta_1)$ and $(\pi_j,\theta_j)$ is updated to $(\pi_{j+1},\theta_{j+1})$, i.e., $\pi_{i+1}=f(\theta_i)$ and $\theta_{i+1}=g(\pi_{i+1})$. For convenience, we denote the update as $(\pi_j,\theta_j)\rightarrow(\pi_{j+1},\theta_{j+1})$. If  $\theta_j\geq\theta_{j+1}$ (or $\leq$), then $\theta_{j+1}\geq\theta_{j+2}$ (or $\leq$) since $g\circ f(\theta_j)\geq g\circ f(\theta_{j+1})$ (or $\leq$). Hence, $\theta_j$ is increasing or decreasing on $j$. By monotonic property of $\theta_j$, $k(j)$ corresponding to $\pi_j=\pi'_{k(j)}$ also monotonic on $j$. Since an optimal policy holds $(f\circ g)(\pi^*)=\pi^*$ and $i\leq T$, there exist $s\in [0:T-1]$ such that $(f\circ g)^s(\pi_1)=(f\circ g)^{s+1}(\pi_1)$ for any initial $K$-staying policy $\pi_1$. 

Now, it remains to show the increasing property of $g\circ f$. It can be checked as follows.  First, Theorem \ref{thm3}  shows that $f(\theta)$ is a $K(\theta)$-staying policy. This threshold $K(\theta)$ increases in $\theta$ since the risk of not getting the chance to communicate in the communication phase right after the hopping action increases in $\theta$. It is also shown in Fig. \ref{Fig8}. Next, for $K$-staying policy, $g(\pi)$ increases in $K$. 
To see this, first note that the number of the S-R pairs selecting hopping action decreases in $K$ and hence the probability that multiple S-R pairs are in the same channel also decreases. Hence, as $K$ increases, the total number of channels occupied by the S-R pairs increases. Therefore, if an S-R pair takes the hopping action, the probability that it hops to a channel that has been already occupied by other S-R pair(s), i.e., the probability of collision $g(\pi)$, also increases in $K$. Consequently, the function $g\circ f$ is  increasing.\end{proof}

\begin{remark}\label{rmk2}
In the single-user case $(n=1)$, we can obtain an optimal frequency hopping policy $\pi^*$ through a single update ($s=1$ in Theorem \ref{thm4}) because $\theta=0$.
\end{remark}

\begin{figure}
\centering
\includegraphics[width=0.8\columnwidth]{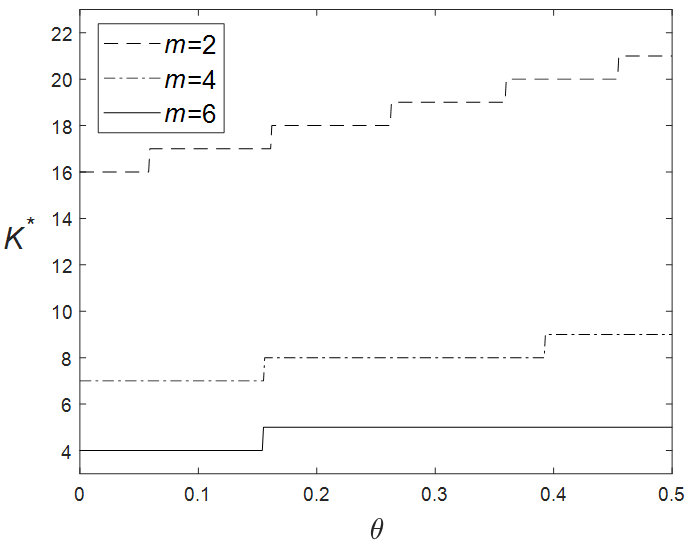}
\caption{Optimal staying-threshold $K(\theta)$ versus inactive probability $\theta$ for $m=2,4,6$ for the basic sweep jamming environment with parameters $R=5$, $C=5$, $L=20$, $M=60$, and $\gamma=0.9$.}\label{Fig8}
\end{figure}

The optimality of the $K^*$-staying policy can be explained similarly as in the single-user case: an S-R pair is better to hop to the next channel if the risk of being jammed by staying the same channel outweighs the cost of hopping together with the risk of not getting the chance to communicate in the communication phase in the newly hopped channel. 

%selecting $m$ target channels uniformly at random out of all $M$ channels except the channels scanned in the recent $K^*$ time slots.
Now the following theorem states that not only the $K^*$-staying policy is an optimal frequency hopping policy against the basic jamming strategy, but also the basic jamming strategy is an optimal jamming strategy against the 
$K^*$-staying policy. 
\begin{theorem}\label{thm5}
In the multi-user case,  against the $K^*$-staying policy with optimal staying-threshold $K^*$,  the basic sweep jamming is an optimal jamming strategy. 
\end{theorem}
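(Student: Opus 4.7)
The plan is to parallel the single-user argument by first establishing that the $K^*$-memory jamming is optimal against the $K^*$-staying policy (the multi-user analog of Theorem~\ref{thm1.1}) and then showing that the basic sweep jamming achieves the same EDSR (the analog of Theorem~\ref{thm2}), which together imply that the basic sweep jamming is itself optimal. For the first step, I would extend the event-decomposition argument in the proof of Theorem~\ref{thm1.1}. Suppose the jammer has scanned channels $\{c_1,\ldots,c_{K'}\}$ during time slots $[1:K^*]$, where $K'\leq mK^*$. Because the jammer's objective is the negative sum of the users' rewards, by linearity of expectation it suffices to argue that for each individual S-R pair and any unscanned channel $c_i$, the probability that the pair occupies $c_i$ at time $K^*+1$ is at least as large as the analogous probability for a scanned channel $c_j$. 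Decomposing into the events $E_\tau(c_\ell)$ as in the single-user proof, the case $\tau=0$ gives equal probabilities by the uniform hopping rule, while for $\tau\geq 1$ the probability associated with a scanned channel is smaller since a pair staying there would necessarily have been jammed during the relevant slot. Symmetry of the $K^*$-staying policy across users and of the uniform hopping distribution makes these bounds hold for every S-R pair, so scanning $m$ channels from $\{c_{K'+1},\ldots,c_M\}$ is optimal; such an $m$-subset always exists since $K'\leq m(T-1)$ and $M=mT$, yielding the $K^*$-memory jamming.

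For the second step, I would apply the Bellman expectation equation as in Theorem~\ref{thm2}. Under the $K^*$-staying policy the only visitable state-action pairs for each S-R pair are $(J,h)$, $(I,h)$, $(K,s)$ for $K\in[1:K^*]$, and $(K^*+1,h)$, and their transition probabilities are determined by the jamming probability together with the collision/inactive probability. Both the basic sweep jamming and the $K^*$-memory jamming scan non-overlapping channels within any window of $K^*$ consecutive time slots, so the per-slot jamming probabilities coincide; moreover, the two strategies present the same statistical environment to the other S-R pairs under the $K^*$-staying policy, so the induced inactive probability $\theta^*$ is identical. Consequently the transition functions at every visitable state-action pair agree, and the EDSR under the $K^*$-staying policy is the same for both jamming strategies, which together with the optimality of the $K^*$-memory jamming yields the theorem.

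The main obstacle is handling the inactive probability $\theta^*$ rigorously: unlike in the single-user case, one must verify that the jammer's choice of strategy does not indirectly alter the hopping distribution of the other S-R pairs, which would in turn shift $\theta^*$ and break the equivalence argument. The resolution is that whenever the jammer restricts to a $K^*$-memory-type strategy---of which the basic sweep jamming is one instance---each S-R pair experiences the same marginal distribution over being jammed, newly hopped, or successfully staying, so its subsequent action distribution and hence the collision statistics are identical across the two strategies, leaving $\theta^*$ invariant.
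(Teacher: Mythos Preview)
Your proposal is correct and follows essentially the same two-step approach as the paper: first extend Theorem~\ref{thm1.1} to the multi-user case to show the $K^*$-memory jamming is optimal against the $K^*$-staying policy, then extend Theorem~\ref{thm2} by checking that the visitable state-action pairs $(J,h)$, $(I,h)$, $(K,s)$ for $K\in[1:K^*]$, and $(K^*+1,h)$ have identical transition functions under both jamming strategies, with the additional verification that the inactive probability $\theta^*$ coincides. If anything, you are more explicit than the paper about why $\theta^*$ is invariant across the two strategies, which the paper asserts rather tersely.
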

\begin{proof}
It can be proved in a similar manner as in~Theorem~\ref{thm1.1} that an optimal jamming strategy against the $K^*$-staying policy is  the $K^*$-memory jamming also in the general multi-user case. Now let us show that the EDSR of  an S-R pair  is the same for both the basic sweep jamming and the $K^*$-memory jamming environments, which proves the theorem. The EDSR of an S-R pair for a given pair of frequency hopping policy and jamming strategy can be derived similarly with Theorem \ref{thm2}, but we also consider whether the inactive probability is the same in both the jamming strategies. For the $K^*$-staying policy, the Bellman expectation equations corresponding to both the jamming strategies require only the transition functions $p_\theta(\cdot|I,h)$, $p_\theta(\cdot|J,h)$, $p_\theta(\cdot|K,s)$ for $K\in[1:K^*]$, and $p_\theta(\cdot|K^*+1,h)$. These transition functions are the same in both the jamming strategies since they attack non-overlapping channels in any window of $K^*$ consecutive time slots and hence the inactive probability is the same in both the jamming strategies. Consequently, we conclude that the EDSR of an S-R pair is the same in both the jamming environments.
\end{proof}

From Theorems \ref{thm3}-\ref{thm5}, we obtain the following main theorem on the equilibrium of the arms race in general multi-user environment:
\begin{theorem} \label{thm:main_multi}
For general multi-user environment, the $K^*$-staying policy with optimal staying-threshold $K^*$ and the basic sweep jamming strategy establish an equilibrium. 
\end{theorem}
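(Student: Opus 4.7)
The plan is to deduce Theorem~\ref{thm:main_multi} essentially as an immediate corollary of Theorems~\ref{thm3}, \ref{thm4}, and \ref{thm5}, in direct analogy to how Theorem~\ref{thm:main_single} was obtained from Theorems~\ref{thm1}, \ref{thm1.1}, and \ref{thm2}. An equilibrium in this zero-sum arms race simply means that, when the two parties adopt the declared pair of strategies, neither the users nor the jammer can strictly improve their objective (the EDSR of the S-R pairs) by unilaterally deviating. So the proof amounts to checking each side's best-response condition.

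First I would invoke Theorems~\ref{thm3} and \ref{thm4} to verify the users' side: fixing the jammer to the basic sweep jamming, Theorem~\ref{thm3} guarantees that any best response under an inactive probability $\theta$ has the form of a $K(\theta)$-staying policy, and Theorem~\ref{thm4} shows that the coupled iteration $\pi_{i+1}=f(\theta_i)$, $\theta_{i+1}=g(\pi_{i+1})$ terminates in at most $T-1$ steps at a fixed point $\pi^*=f\circ g(\pi^*)$, namely the $K^*$-staying policy. By construction this fixed point is the optimal MDP policy under the self-consistent inactive probability $\theta^*$ induced by the other S-R pairs also using it, so no symmetric frequency hopping policy attains a strictly larger EDSR against the basic sweep jamming.

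Next I would invoke Theorem~\ref{thm5} for the jammer's side: against the $K^*$-staying policy, the basic sweep jamming attains the minimum EDSR among all jamming strategies, because Theorem~\ref{thm5} first identifies the $K^*$-memory jamming as an optimal jamming response and then shows that, restricted to the visitable state-action pairs $\{(I,h),(J,h),(K,s)_{K\in[1:K^*]},(K^*+1,h)\}$, the transition kernels (and hence the induced inactive probability $\theta^*$ and the Bellman expectation equation) coincide under the basic sweep jamming and the $K^*$-memory jamming. Hence both strategies give the jammer the same (minimum) EDSR, and the jammer cannot strictly decrease the EDSR by deviating from the basic sweep jamming.

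Combining these two observations gives the equilibrium: any deviation by the users does not increase the EDSR (by Theorems~\ref{thm3}--\ref{thm4}) and any deviation by the jammer does not decrease it (by Theorem~\ref{thm5}). I do not anticipate any real obstacle here, since all the heavy lifting — the monotonicity of $Q_\theta^{*}(K,s)$ in $K$, the termination of the $(\pi,\theta)$ iteration, and the kernel-matching argument on the visitable support — has already been carried out in the three preceding theorems; the only care needed is to state the equilibrium condition in a form (best response on each side) that is matched by exactly those three results.
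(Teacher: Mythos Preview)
Your proposal is correct and matches the paper's own proof essentially line for line: the paper simply says the result follows in the same way as Theorem~\ref{thm:main_single}, using Theorems~\ref{thm3}--\ref{thm4} for the users' best-response side and Theorem~\ref{thm5} for the jammer's best-response side. Your write-up is in fact more detailed than the paper's one-line proof, but the structure and the invoked results are identical.
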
 
\begin{proof}
It can be proved in a similar way as in Theorem \ref{thm:main_single}, based on Theorems \ref{thm3}-\ref{thm5}. 
\end{proof}
%Fig. \ref{Fig10} illustrates the arms race between an S-R pair and the jammer.
\begin{remark}\label{rmk2.1}
Consider a cognitive network setting with multiple primary users and secondary users, where each of the secondary users  is allowed to communicate only when there is no primary user in the current channel \cite{Liang:2011}. For the anti-jamming game between the secondary users and a jammer, it can be checked that Theorem \ref{thm:main_multi} also holds. 
\end{remark}

Now, let us provide some examples of collision avoidance protocols. To describe the protocols, assume that the sensing phase was silent for a channel, i.e., there was no S-R pair that has occupied the channel. The simplest protocol would be all-hopping protocol.\footnote{This protocol can be implemented by modifying Carrier Sense Multiple Access (CSMA) with collision detection protocol  \cite{Tanenbaum:2010}.} In this protocol, every sender who newly hops to the channel broadcasts a pilot signal at a random moment. If there are more than one pilot signals, all the S-R pairs give up the communication and randomly hop to the other channels in the next time slot. Although this protocol is simple, it has the disadvantage that no one uses the channel if there are more than two S-R pairs hopping to the channel. The random protocol can resolve this issue.\footnote{This protocol can be implemented by modifying CSMA with collision detection and non-persistent CSMA protocols in \cite{Tanenbaum:2010}.} In the random protocol, each sender broadcasts a pilot signal at a random instant during the collision avoidance phase, and the S-R pair who first broadcasts the signal occupies the channel, i.e., it is allowed to communicate.

In the multi-user case, the optimal staying-threshold $K^*$ at the equilibrium depends on the number $n$ of S-R pairs and the collision avoidance protocol, in addition to various parameters such as $m$, $M$, $R$, $L$, and $C$. It increases in $n$ since the inactive probability increases in $n$. To explain the relationship between $K^*$ and the collision avoidance protocols of interest, we present some bounds on the inactive probability for each collision avoidance protocol in the following propositions.

\begin{proposition}\label{pro1}
For the all-hopping collision avoidance protocol, the inactive probability $\theta$ of a user is bounded as follows for any frequency hopping policy and for any jamming strategy:
\begin{align*}
1-\left(1-{1/M}\right)^{n-1}\leq\theta\leq{{n-1}\over M}. 
\end{align*}
\end{proposition}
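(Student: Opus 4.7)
The plan is to exploit the simple structure of the all-hopping protocol: a newly hopping S-R pair (call it user~1) is inactive exactly when its freshly chosen channel coincides with the channel occupied by some other user, whether that user is staying in its previous channel or also newly hopping. Let $c$ denote user~1's channel, which is uniform on $\{c_1,\ldots,c_M\}$ and independent of everything else, and let $Y_i$ denote the channel in which user~$i$ resides for $i\in\{2,\ldots,n\}$. Then $\theta=P\bigl(c\in\{Y_2,\ldots,Y_n\}\bigr)$, and since $c$ is uniform and independent of $Y_i$, I have $P(c=Y_i)=1/M$ for every $i$.

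The upper bound follows immediately from a union bound:
\[
\theta\leq\sum_{i=2}^{n}P(c=Y_i)=\frac{n-1}{M}.
\]
For the lower bound, I would condition on which $k$ of the other $n-1$ users are staying and on the specific channels those stayers occupy. The priority rule of the sensing phase forces any two staying users to sit in distinct channels (otherwise one of them would have had to give up its channel in an earlier slot), so the stayers occupy a fixed set $S$ of size $k$, while the $n-1-k$ hopping users draw their channels i.i.d.\ uniformly and independently of $c$. A short computation gives
\[
P\bigl(c\notin\{Y_2,\ldots,Y_n\}\,\bigm|\,S,k\bigr)=\frac{M-k}{M}\Bigl(1-\frac{1}{M}\Bigr)^{n-1-k}.
\]

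The final step is to invoke Bernoulli's inequality $(1-1/M)^{k}\geq 1-k/M$, which rearranges to $(M-k)/M\leq(1-1/M)^{k}$; multiplying by $(1-1/M)^{n-1-k}$ shows that the displayed conditional probability is at most $(1-1/M)^{n-1}$. Averaging over the conditioning preserves this bound, so $P(\text{no collision})\leq(1-1/M)^{n-1}$ and hence $\theta\geq 1-(1-1/M)^{n-1}$. The only mildly delicate point is justifying that staying users occupy distinct channels; once that structural fact is clean, the rest of the argument is just the union bound on one side and Bernoulli on the other.
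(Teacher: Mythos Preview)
Your argument is correct, and it takes a cleaner route than the paper's own proof. The paper bounds $\theta$ by asserting that the two extremal configurations---all other users hopping (for the lower bound) and all other users staying in distinct channels (for the upper bound)---minimise and maximise the number of channels blocked for user~1, and then computes $\theta$ exactly in those two scenarios. That identification of extremal scenarios is stated rather than proved. You instead work with an arbitrary configuration: the union bound $\theta\le\sum_{i=2}^n P(c=Y_i)=(n-1)/M$ handles the upper bound in one line, and for the lower bound you condition on the set $S$ of stayers' channels, compute
\[
P(\text{no collision}\mid S,k)=\frac{M-k}{M}\Bigl(1-\frac{1}{M}\Bigr)^{n-1-k},
\]
and then apply Bernoulli's inequality $(M-k)/M\le(1-1/M)^k$ to dominate this uniformly by $(1-1/M)^{n-1}$. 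This is strictly more rigorous, since it shows directly that every intermediate configuration is sandwiched between the two extremes, rather than appealing to an informal monotonicity claim. You are also right that the one structural fact you need is $|S|=k$, i.e., that staying users occupy distinct channels; your justification via the priority rule in the sensing phase (a user can stay only if it successfully communicated last slot, and at most one user communicates per channel) is exactly what is required.
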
    

\begin{proof}
To derive a range of $\theta$, we need to state the set of the actions of the other S-R pairs inducing the largest and the smallest inactive probabilities. When an S-R pair hops to the next channel, note that the smaller the number of channels where the other S-R pairs exist is, the smaller the inactive probability is. Hence, a lower-bound on $\theta$ can be derived under the scenario where all the other S-R pairs select the hopping action because when the number of hopping S-R pairs increases, the probability that two or more S-R pairs are assigned to the same channel also increases. Hence, $\theta$ is lower-bounded as: 
\begin{align}
\theta\geq 1-\left(1-{1/M}\right)^{n-1}, \label{eq:36}
\end{align}
since the probability that the S-R pair and another S-R pair are allocated to the different channels is $1-1/M$ and the S-R pairs hop with i.i.d. process. 

On the other hand, an upper-bound on $\theta$ can be derived under the scenario where the other S-R pairs select the staying action, i.e, at most one S-R pair occupies a channel. Thus, $\theta$ is upper-bounded as:
\begin{align}
\theta\leq {{n-1}\over M}, \label{eq:37}
\end{align}
since $(n-1)$ channel are already allocated to the other S-R pairs.
\end{proof}

\begin{proposition}\label{pro2}
For the random collision avoidance protocol,  the inactive probability $\theta$ of a user is bounded as follows for any frequency hopping policy and for any jamming strategy:
\begin{align*}
\sum_{i=1}^{n-1} {i\over {i+1}}{{n-1} \choose i}\left({1/M}\right)^{i}\left(1-{1/M}\right)^{n-1-i}\leq\theta\leq{{n-1}\over M}.
\end{align*}
\end{proposition}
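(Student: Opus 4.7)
The plan is to express the inactive probability $\theta$ explicitly as a function of a natural parameterization of the other users' configuration and then extremize. Let $n_s$ denote the number of stayers among the $n-1$ other users (occupying $n_s$ distinct channels, since the protocol allows at most one user per channel) and $n_h := n-1-n_s$ the number of hoppers (each landing independently and uniformly over the $M$ channels). Conditioning on the given user hopping to channel $c$, the random collision avoidance protocol renders it inactive with probability $1$ if $c$ is occupied by a stayer, and with probability $H_c/(H_c+1)$ otherwise, where $H_c$ is the number of other hoppers in $c$. Averaging over the uniform choice of $c$, using $H_c \sim \mathrm{Bin}(n_h, 1/M)$, together with the standard identity $E[1/(X+1)] = (1 - (1-p)^{N+1})/((N+1)p)$ for $X \sim \mathrm{Bin}(N, p)$, yields the closed form
\begin{align*}
\theta(n_s) = 1 - \frac{(M - n_s)\bigl[1 - (1 - 1/M)^{n - n_s}\bigr]}{n - n_s}.
\end{align*}

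For the upper bound, a union bound already suffices: the probability that at least one of the $n-1$ other users lies in the given user's channel is at most $(n-1)/M$ (each user contributes $1/M$, whether staying or hopping), and $\theta$ is bounded above by this collision probability. This matches $\theta(n_s = n-1) = (n-1)/M$, attained when all others stay in distinct channels.

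For the lower bound, substituting $n_s = 0$ in the closed form and expanding $E[H/(H+1)]$ via the Binomial theorem produces exactly the stated sum (the $i=0$ summand vanishes). To verify this is the minimum over all configurations, I would establish that $\theta(n_s)$ is monotonically non-decreasing in $n_s$, or equivalently that the auxiliary function $\tilde\phi(k) := (M - n + k)\bigl[1 - (1 - 1/M)^k\bigr]/k$ is non-decreasing in $k \in \{1, \ldots, n\}$. This monotonicity step is the main technical obstacle: after clearing denominators it reduces to a polynomial inequality in $q = 1 - 1/M$ that holds on $[0,1]$ and can be verified directly, or alternatively via a one-stayer-at-a-time conversion argument that replaces a single stayer by a hopper and shows $\theta$ weakly decreases at each step (in the spirit of the monotonicity argument used in Theorem \ref{thm4}).
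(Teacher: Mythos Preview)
Your approach is correct and follows the same extremal-configuration idea as the paper, but you carry it out more explicitly. The paper's proof simply asserts, by reference to the intuitive argument in Proposition~\ref{pro1}, that the inactive probability is minimized when all other users hop and maximized when all other users stay, and then computes $\theta$ directly in those two scenarios (the lower bound as $\sum_i \frac{i}{i+1}p_i$ with $p_i$ binomial, the upper bound as $(n-1)/M$). You instead derive the closed form $\theta(n_s)$ parameterized by the number of stayers, identify the endpoints $n_s=0$ and $n_s=n-1$ as yielding exactly the two bounds, and flag the monotonicity of $\theta(n_s)$ as the step that actually justifies the extremization. This is a genuine improvement in rigor: the paper's justification (``the smaller the number of channels where the other S-R pairs exist is, the smaller the inactive probability is'') is heuristic and does not cleanly transfer from the all-hopping protocol to the random protocol, whereas your monotonicity claim is the precise statement needed and is indeed provable by either of the routes you sketch. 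The cost is the extra work of verifying that monotonicity, which the paper sidesteps; the benefit is that your argument would survive scrutiny that the paper's informal reasoning might not.
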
   

\begin{proof}
The proof of the lower-bound is similar as in Proposition \ref{pro1}, but now the random protocol is used to avoid the communication collision:  
\begin{align}
\theta&\overset{(a)}\geq\sum_{i=1}^{n-1} {i\over {i+1}}\cdot p_i \label{eq:38}\\ 
&=\sum_{i=1}^{n-1} {i\over {i+1}}{{n-1} \choose i}\left({1/M}\right)^{i}\left(1-{1/M}\right)^{n-1-i},\label{eq:39}
\end{align}
where $p_i$ is the probability that the S-R pair and $i$ other S-R pairs are allocated to the same channel. Here, $(a)$ is because when $i+1$ S-R pairs are allocated to the same channel, the probability that the S-R pair is allowed to communicate in the communication phase is $i\over{i+1}$. The proof of an upper-bound on $\theta$ is same with Proposition \ref{pro1} since we give priority to communicate to the S-R pairs that already occupied their channels, i.e., the S-R pairs choosing the staying action.
\end{proof}

We note that the random protocol has a smaller lower bound than the all-hopping protocol. This is because for the random protocol, when there are multiple S-R pairs hopping to an empty channel, an S-R pair still uses the channel.  When the inactive probability $\theta$ is bounded as $\theta_\mathrm{min}\leq \theta\leq \theta_\mathrm{max}$, the optimal staying-threshold $K^*$ is bounded as:
\begin{align}
K(\theta_\mathrm{min})\leq K^*\leq K(\theta_\mathrm{max}),\label{eq:39.1}
\end{align}
where $K(\theta)$ is a staying-threshold obtained by Bellman optimality equation for inactive probability $\theta$. This inequality holds since $K(\theta)$ increases in $\theta$ as explained in the proof of Theorem \ref{thm4}. Fig. \ref{Fig9} illustrates these bounds on $K^*$ for the all-hopping and the random protocols, from which we can see that the lower and upper bounds are quite tight. 

\begin{figure}
\centering
\includegraphics[width=0.8\columnwidth]{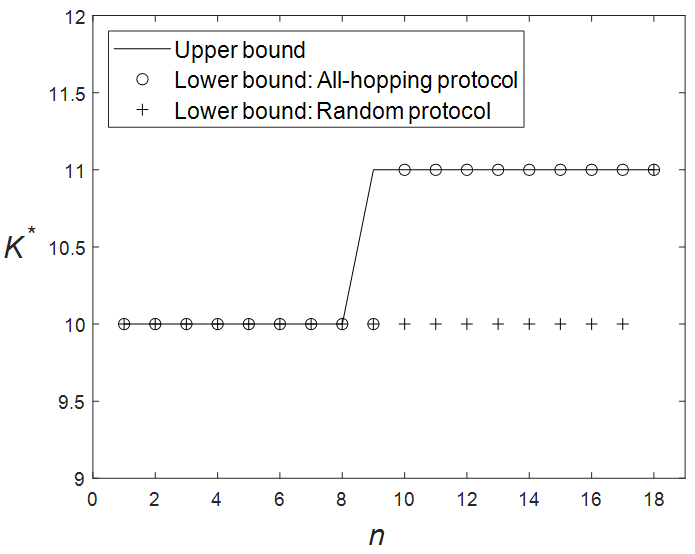}
\caption{The bounds on optimal staying-threshold $K^*$ according to the number $n$ of users for the all-hopping and the random collision avoidance protocols with parameters $R=5$, $C=5$, $L=20$, $M=60$, $m=3$, and $\gamma=0.9$.}\label{Fig9}
\end{figure}

Fig. \ref{Fig10} plots $g\circ f:[0,1]\rightarrow [0,1]$ for the basic sweep jamming environment with the random protocol, in which we can check that the function is an increasing step function as analyzed in the proof of Theorem \ref{thm4}. Fig. \ref{Fig11} illustrates the arms race between an S-R pair and the jammer for the all-hopping protocol, where we can check that the $K^*$-staying policy and the basic sweep jamming strategy establish an equilibrium for  general multi-user case since the $K^*$-memory jammer does not decrease the EDSR compared to the basic sweep jammer, as proved in Theorem \ref{thm5}.
Finally, Fig. \ref{Fig12} shows the EDSRs at the equilibrium for the all-hopping protocol and the random protocol. We can see that EDSR using the random protocol is higher, and the gap from the EDSR using the all-hopping protocol increases as the number of S-R pairs increases. This is because, as the number of S-R pairs increases, the probability that there are multiple S-R pairs in the same channel increases and hence the use of an efficient collision avoidance protocol becomes more critical.

\begin{figure}
\centering
\includegraphics[width=0.8\columnwidth]{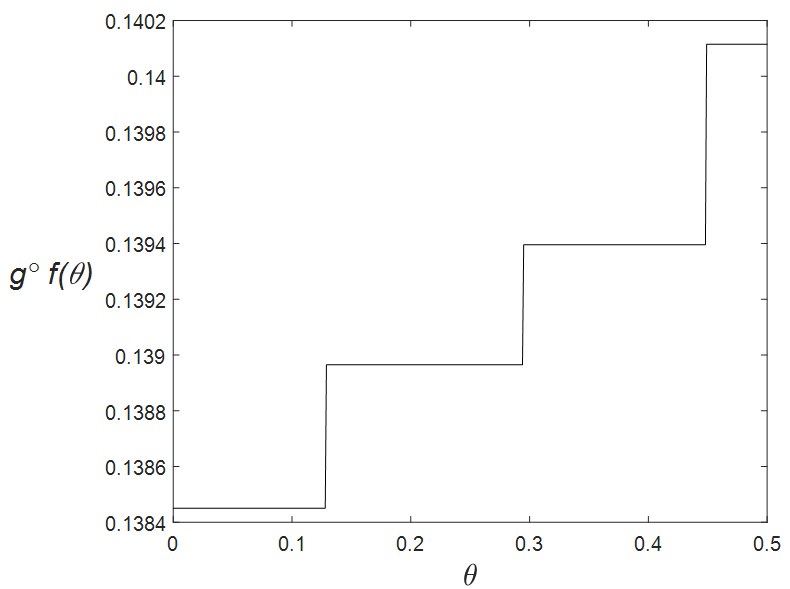}
\caption{Inactive probability $\theta$ versus $g\circ f(\theta)$ for the basic sweep jamming environment with the random protocol and parameters  $R=5$, $C=5$, $L=20$, $n=10$, $M=60$, $m=3$, and $\gamma=0.9$.}\label{Fig10}
\end{figure}

\begin{figure}
\centering
\includegraphics[width=0.8\columnwidth]{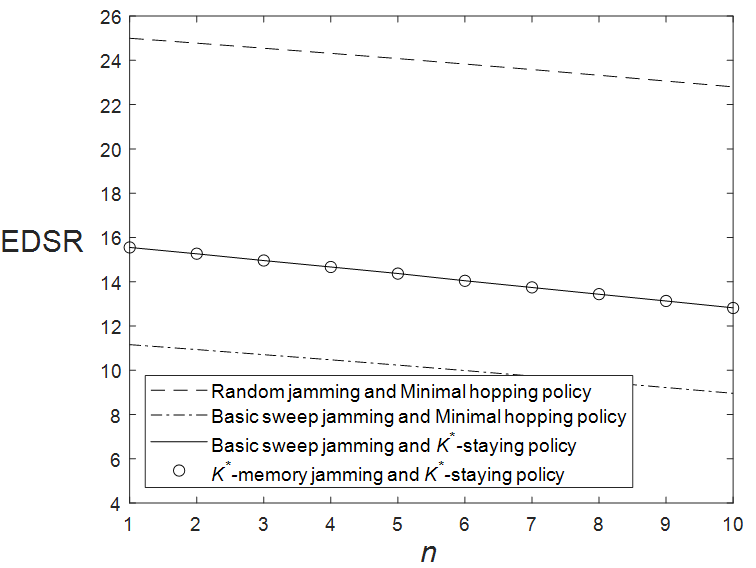}
\caption{The EDSRs  in the arms race according to the number $n$ of users under the all-hopping collision avoidance protocol and parameters $R=5$, $C=5$, $L=20$, $M=60$, $m=5$, and $\gamma=0.9$. }\label{Fig11}
\end{figure}

\begin{figure}
\centering
\includegraphics[width=0.8\columnwidth]{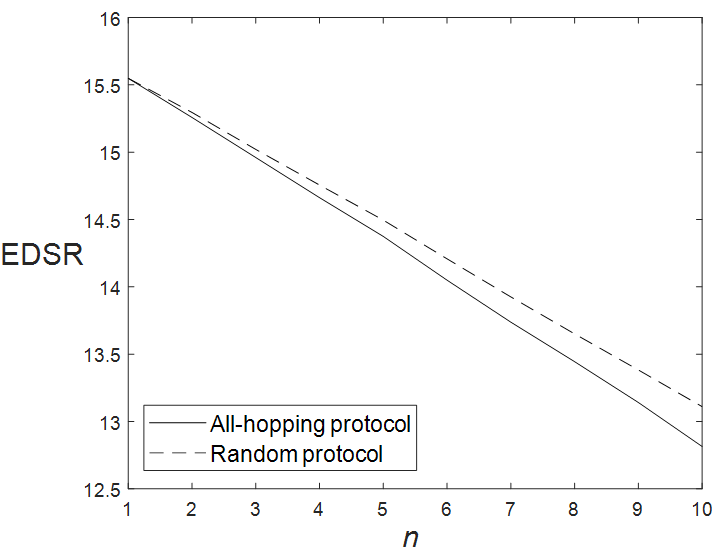}
\caption{The EDSRs according to the number $n$ of users at the equilibrium for the all-hopping and the random collision avoidance protocols with the parameters $R=5$, $C=5$, $L=20$, $M=60$, $m=5$, and $\gamma=0.9$.}\label{Fig12}
\end{figure}

\begin{remark}\label{rmk3}
For the basic sweep jamming, the S-R pairs can eventually learn the sweep pattern after a sufficiently long time as pointed out in Remark \ref{rmk1}. In Section \ref{sec6}, we consider  about the case where the jammer adopts some unpredictable jamming strategies. \end{remark}

\section{Unpredictable jamming strategies in multi-user environment}\label{sec6}
In Section \ref{sec5}, it is shown that the basic sweep jamming and the $K^*$-staying policy with optimal staying-threshold $K^*$ form an equilibrium. One drawback of the basic sweep jamming is that the S-R pairs may eventually learn the sweep pattern after a sufficiently long time, as the sweeping pattern is fixed. In this section, we consider the case where the jammer adopts an unpredictable jamming strategy such as the reactive sweep jamming or the $G$-memory jamming for $G\in[0:T-2]$. We note that the $G$-memory jamming is more unpredictable for smaller $G$ since the jammer selects $m$ target channels uniformly at random out of $M-Gm$ channels. %We numerically derive the best unlearnable jamming strategy and show that the arms race reaches an equilibrium after the round of finding an optimal frequency hopping policy against the best unlearnable jamming strategy under a constraint. 

We run numerous simulations with various parameter settings to compare the EDSRs for $G$-memory jamming and the reactive sweep jamming strategies when the S-R pairs apply corresponding optimal frequency hopping policies, and Fig.~\ref{Fig13} obtained from a certain parameter setting shows a typical tendency.\footnote{Note that the basic sweep jamming corresponds to the $(T-1)$-memory jamming.} Main observations from the simulation results are summarized in the following: 
\begin{itemize}
\item The EDSR of the reactive sweep jamming is the same with that of the basic sweep jamming for $n=1$ as proved in Theorem \ref{thm2}. But it becomes larger for $n\geq 2$, since the reactive sweep jammer can newly start sweeping even if an S-R pair stays in the same channel and hence the S-R pair has a lower risk of being jammed than the basic sweep jamming environment.
\item The EDSR of the $G$-memory jamming for $G\in[0:T-1]$ is non-increasing in $G$ since the probability that an S-R pair staying in a same channel is jammed increases in $G$. Let $K^*$ denote the optimal staying-threshold against the basic sweep jamming. Interestingly, we could check numerically that the EDSR is strictly decreasing for $G\in[0:K^*+1]$ and remains the same for $G\in[K^*+1:T-1]$ for any $n\geq 1$. For the setting of Fig. \ref{Fig13}, the optimal staying-threshold $K^*$ is given as $5$ for $n\in[1:6]$ and $6$ for  $n\in[7:10]$. We can see that the EDSR of the $6$-memory jamming is the same with that of the basic sweep jamming for $n\in[1:6]$ and there exists a gap for $n\in [7:10]$. For the $7$-memory jamming, its EDSR is the same with the basic sweep jamming for all $n\in [1:10]$. 
\end{itemize} 

\begin{figure}
\centering
\includegraphics[width=0.8\columnwidth]{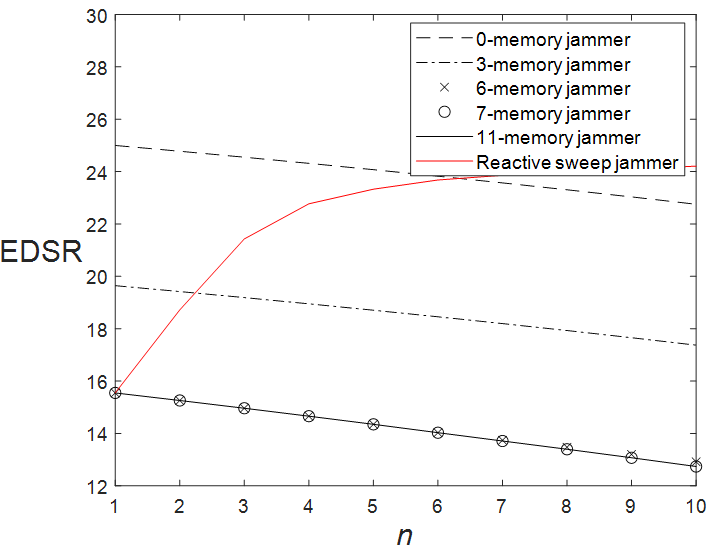}
\caption{The EDSRs according to the number $n$ of users for the reactive sweep jamming and the ($0,3,6,7,11$)-memory jamming strategies when the users apply corresponding optimal frequency hopping policies. The all-hopping collision avoidance protocol and parameters $R=5$, $C=5$, $L=20$, $M=60$, $m=5$, and $\gamma=0.9$ are applied, which implies that $T=12$ and hence $11$-memory jamming corresponds to the basic sweep jamming. In this setting, the optimal staying-threshold $K^*$ is given as $5$ for $n\in[1:6]$ and $6$ for $n\in[7:10]$.}\label{Fig13}
\end{figure}

The aforementioned analysis implies that if the optimal staying-threshold $K^*$ lies in $[0:T-3]$, it is preferable for the jammer to apply the $(K^*+1)$-memory jamming for any $n\geq 1$, since it achieves the same EDSR as the basic sweep jamming, which is the best jamming strategy against the $K^*$-staying policy, and also it is unpredictable. Furthermore, when $K^*\in [0:T-3]$, various numerical results show that an optimal frequency hopping policy against the $(K^*+1)$-memory jamming is the $K^*$-staying policy. This implies that for $K^*\in [0:T-3]$, $(K^*+1)$-memory jamming and the $K^*$-staying policy establish an equilibrium. This analysis suggests that if the jammer knows that $K^*$ lies in $[0:T-3]$, it is preferable for the jammer to apply the $(K^*+1)$-memory jamming, but in general it is hard for the jammer to know $K^*$ as it is a function of user parameters such as $R, C, L$ etc.

\begin{remark}\label{rmk5}
Against the $(K^*+1)$-memory jamming, we can analytically show that an optimal frequency hopping policy is a $\tilde{K}$-staying policy where $\tilde{K}\in [0:K^*,\infty]$ and the optimal staying-threshold $\tilde{K}$ can be obtained by finite iterative updates, which can be proved in a similar manner as in Theorems~\ref{thm3} and \ref{thm4}. Then, by extending the proof of Theorem~\ref{thm5}, we can show that if $\tilde{K}\in[0:K^*]$, the $(K^*+1)$-memory jamming and the $\tilde{K}$-staying policy establish an equilibrium. Proving exact $\tilde{K}$ value is a challenging problem since Bellman optimality equation does not have a closed form solution, but various simulations consistently show that $\tilde{K}$ equals $K^*$. 
\end{remark}

\section{Conclusion}\label{sec7} 
It was shown that the $K^*$-staying frequency hopping policy with optimal staying-threshold $K^*$ and the basic sweep jamming strategy establish an equilibrium of the anti-jamming game in multi-band wireless ad hoc networks. Various numerical results were provided to show the effects of the reward parameters and collision avoidance protocols on the optimal staying-threshold $K^*$ and the expected rewards at the equilibrium. Furthermore, we discussed about the equilibrium when the jammer adopts some unpredictable jamming strategies.  

Let us conclude with a final remark. We assumed that the users communicate with a fixed rate, but the users can increase the EDSR at an equilibrium by changing their communication rates adaptively. The arms race with rate adaptation was studied for the single-user environment \cite{Hanawal:2016}. It would an interesting further work to study the effect of the rate adaptation in the anti-jamming game in multi-user environments. 
\bibliographystyle{IEEEtran}
\bibliography{ref2}

\end{document}